\newtheorem{corollary}{Corollary}
\newtheorem{proposition}{Proposition}
\newtheorem{lemma}{Lemma}
\newtheorem{theorem}{Theorem}
\newenvironment{proof}{{\ \noindent\it Proof:\ }}{\hfill $\square$\par}
\def\BibTeX{{\rm B\kern-.05em{\sc i\kern-.025em b}\kern-.08em
  T\kern-.1667em\lower.7ex\hbox{E}\kern-.125emX}}
\begin{document}

\title{Joint Coding-Modulation for Digital Semantic Communications via Variational Autoencoder
\author{
Yufei~Bo, Yiheng~Duan, Shuo~Shao, Meixia~Tao}
\thanks{Y. Bo, Y. Duan, and M. Tao are with the Department of Electronic Engineering and Shanghai Key Laboratory of Digital Media Processing and Transmission, Shanghai Jiao Tong University, Shanghai, 200240, China. (emails: \{boyufei01, duanyiheng, mxtao\}@sjtu.edu.cn)}
\thanks{S. Shao is with the School of Cyber Science and Engineering, Shanghai Jiao Tong University, Shanghai, 200240, China. (email: shuoshao@sjtu.edu.cn)}
\thanks{
Preliminary results of this work are presented in\cite{confVersion}, where the digital modulation scheme is fixed as BPSK.}
}
\maketitle



\begin{abstract}
Semantic communications have emerged as a new paradigm for improving communication efficiency by transmitting the semantic information of a source message that is most relevant to a desired task at the receiver. 
Most existing approaches typically utilize neural networks (NNs) to design end-to-end semantic communication systems, where NN-based semantic encoders output continuously distributed signals to be sent directly to the channel in an analog fashion. 
In this work, we propose a joint coding-modulation (JCM) framework for digital semantic communications by using variational autoencoder (VAE). 
Our approach learns the transition probability from source data to discrete constellation symbols, thereby avoiding the non-differentiability problem of digital modulation. 
Meanwhile, by jointly designing the coding and modulation process together, we can match the obtained modulation strategy with the operating channel condition. 
We also derive a matching loss function with information-theoretic meaning for end-to-end training. 
Experiments on image semantic communication validate the superiority of our proposed JCM framework over the state-of-the-art quantization-based digital semantic coding-modulation methods across a wide range of channel conditions, transmission rates, and modulation orders.
Furthermore, its performance gap to analog semantic communication reduces as the modulation order increases while enjoying the hardware implementation convenience.

\end{abstract}

\begin{IEEEkeywords}
Semantic communications, mutual information, variational autoencoder, digital modulation.
\end{IEEEkeywords}

\section{Introduction}

Semantic communications are emerging as a beyond-Shannon-type communication paradigm and are envisioned to be a key technology in future 6G wireless networks\cite{kalfa2021survey1, luo2022survey2, lan2021survey3, gunduz2022beyond, qin2021semantic}. 
Unlike traditional communications which deal with transmitting a source message to be exactly or approximately reconstructed at the destination, semantic communications focus on extracting and transmitting the semantic information of the source so as to enable the destination to make the right decision and execute the desired task. 
It can hence significantly improve the performance and efficiency of data transmission especially when the communication resources are limited.
Semantic communications are promising to enable a variety of intelligent services such as extended reality, metaverse, smart surveillance, and robotic collaboration.

By leveraging the advances in deep learning, semantic communications often use neural networks (NNs) to extract and encode the semantic information and can transmit various types of sources, such as speeches\cite{weng2021speech, han2023speech}, texts\cite{xie2021nn1,zhou2021semantic,xiao2022text}, images\cite{zhang2021bottleneck,liu2021semantics,zhang2022task,ntc}, videos\cite{wang2022wireless,jiang2022wireless} as well as multi-modal data\cite{xie2021nn2}. 
Specifically, the Transformer\cite{vaswani2017attention} architecture is typically adopted for processing text data\cite{xie2021nn1}, whereas convolutional neural networks (CNNs) are commonly seen in the transmission of images\cite{liu2021semantics, gunduz2019jscc1} or videos\cite{wang2022wireless}.
Generative adversarial networks (GANs)\cite{goodfellow2020generative} are also a common architecture for image data\cite{huang2022toward, dai2022gan}.
All in all, through the usage of NNs, semantic communications are suitable for various intelligent tasks such as classification\cite{zhang2021bottleneck}, translation\cite{xie2020benchamrk1}, and object recognition\cite{lee2019deep}.

However, using NN-based semantic encoders also makes it difficult to deploy semantic communication systems on modern digital communication devices. 
Most existing semantic communication system designs, especially those with end-to-end framework, utilize analog modulation instead\cite{xie2021nn1, weng2021speech,zhou2021semantic, xie2021nn2,zhang2021bottleneck,xie2022task, han2023speech, dai2022gan}.
That is, the NN-based semantic encoders usually output continuously distributed signals, which are directly sent into the communication channel without being modulated into discrete constellation symbols. 
Such analog transmission approach, however, is difficult to implement in practice, due to the non-ideal characteristics of hardware components including power amplifier. 
Thus, considering the complexity of hardware implementation as well as the compatibility with existing
digital communication protocols, to
design digital semantic communication systems now becomes a key for enabling the smooth shifting from traditional communication paradigm to semantic communications.

In spite of the undoubted importance of digital semantic communication, it is seldom considered in existing NN-based semantic communication systems due to its intrinsic mechanism. 
Specifically, the process of digital modulation requires a mapping from the continuous real-world source data to discrete constellation symbols, which is equivalent to a non-differentiable function\cite{unconstrained}. 
Therefore, NNs cannot achieve such functions due to the stochastic gradient descent (SGD) algorithms they use for optimization.
Though some existing works can realize digital modulation in semantic communications, they do not match the modulation process with channel states, which results in some performance loss. 
For example, in \cite{xie2020benchamrk1}, the continuous outputs of the NN encoder are first uniformly quantized into discrete number sequences, then modulated into constellation symbol sequences.
As another example, Jiang \emph{et al}. \cite{yeli2022harq} train a multi-layer perceptron (MLP) network with the Sigmoid function to get a non-uniform quantizer that has the minimum quantization error, and further use BPSK to transmit these bits.
As can be seen, the quantization process in the aforementioned works\cite{yeli2022harq, xie2020benchamrk1} is conducted separately without considering the channel condition, which results in unsatisfactory performance.
On the other hand, the work \cite{unconstrained} optimizes coding and modulation jointly through a hard-to-soft quantizer. It uses hard-decision quantization in the forward pass to map the NN output to the nearest constellation symbol. 
This non-differentiable operation is approximated by the Softmax function in the backward pass, enabling end-to-end training.
However, there is still room for improvement since hard decisions lead to information loss and a decline in performance.
Overall, the problems of the non-differentiability of the digital modulation process and the mismatch between digital modulation and channel states still remain to be solved in realizing digital modulation in semantic communications.

To tackle the above challenges in NN-based digital semantic communications, we propose a joint coding-modulation (JCM) framework as a novel end-to-end design for digital semantic communication based on variational autoencoder (VAE).
This framework comprises two main design methodologies. 
First, we address the issue of the non-differentiablility of the digital modulation process by using the NNs at the transmitter to learn the likelihood of constellation symbols instead of generating deterministic constellation symbols. 
More specifically, we first train NNs to learn the optimal transition probability from the source data to the constellation symbols. Then, we randomly generate the constellation symbol sequence based on this transition probability and send it to the receiver.
Second, we address the issue of the mismatch between digital modulation and channel states by designing a joint training process that combines coding, modulation and decoding together. 
More specifically, 
the NN at the transmitter and the NNs at the receiver are jointly trained with channel noise taken into consideration. This approach enables us to learn the optimal transition probability that matches the channel states.

For the end-to-end training of the JCM framework, we leverage the technique of variational learning and derive a matching loss function with physical meanings under the guidance of the information theory.
More specifically, using the principle of Info-Max\cite{linsker1987infomax}, we obtain a variational inference lower bound of the intractable mutual information objective function, which serves as the loss function for the JCM framework. 
By optimizing this loss function, the encoder-modulator can maximize the mutual information, and the decoders can more effectively extract information from the received sequence, resulting in improved decoding performance.

Extensive experiments conducted on image transmission validate three major advantages of our JCM framework. 
First, the proposed JCM framework outperforms the existing separate coding-modulation design via quantization, as well as the hard quantization-based joint design over a wide range of signal-to-noise ratios (SNRs), modulation orders, and transmission rates.
Second, JCM with higher modulation order results in better performances, which provides a convenient way to conduct digital modulation in a semantic communication system, namely by using the highest modulation order that the transmission device could support to achieve the best performance.
Third, JCM can attain a probabilistic shaping of an approximate Gaussian distribution for the AWGN channel without the need for explicit instructions to the NNs regarding the probability distribution of the output signal, which demonstrates its ability to match with the channel conditions.

The rest of the paper is organized as follows. 
In Section II, we introduce our JCM framework and the objective function principles. 
Section III details the training process of the JCM framework, including the derivation of the loss function and the generation of differentiable constellation sequences.
We also present the formulation of the transition probability model in this section. 
In Section IV, we evaluate the performance of our proposed JCM framework through extensive experiments. Finally, we conclude the paper in Section V.

\textit{Notation}: Throughout this paper, we use $H(X)$ to denote the entropy of the variable $X$ and $H(X|Y)$ to denote the conditional entropy of $X$ given $Y$. $I(X;Y)$ refers to the mutual information between $X$ and $Y$.
The statistical expectation of $X$ with probability distribution $p(x)$ is denoted as $\mathbb{E}_{p(x)}\lbrack X \rbrack$.
The Kullback-Leibler (KL) divergence between two probability distributions $p(x)$ and $q(x)$ is denoted as $\mathrm{KL}\lbrack p(x)||q(x) \rbrack$.
We use $\mathcal{N}(\mu, \sigma^2)$ and $\mathcal{CN}(\mu, \sigma^2)$ to respectively denote the real and complex Gaussian distribution with mean $\mu$ and variance $\sigma^2$.
$\mathbf{I}_{k\times k}$ refers to an identity matrix of dimension $k\times k$.
Moreover, $\mathbb{C}^n$ and $\mathbb{R}^n$ represent sets of complex and real vectors of dimension $n$, respectively.

\section{Proposed JCM Framework}

In this section, the proposed JCM framework is presented, including its system model, architecture design and the objective function principles. 

\subsection{System Model}

\begin{figure*}
    \centering
    \includegraphics[width=0.89\textwidth]{./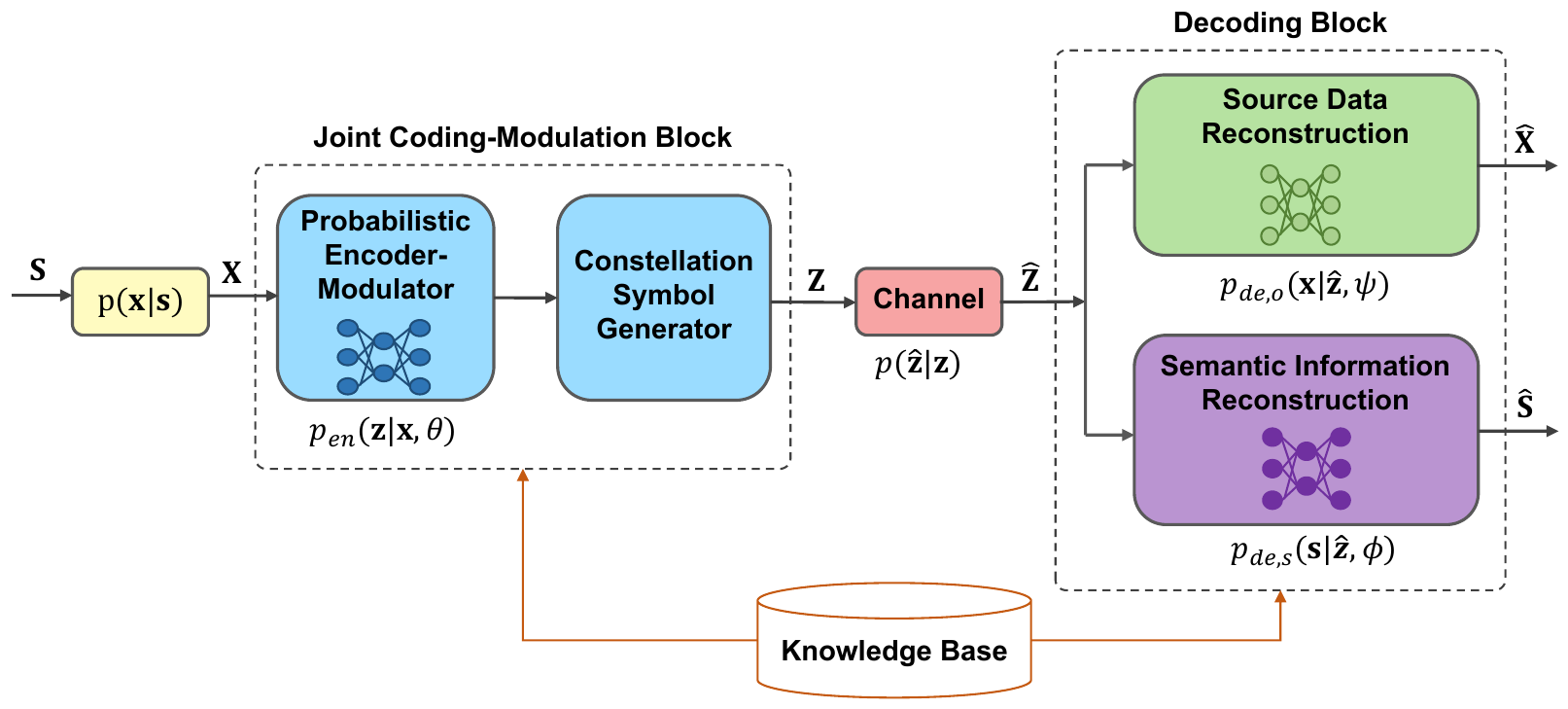}
          \caption{The proposed JCM system model.}
          \label{jcm}
\end{figure*}

The system model of the proposed JCM framework as an end-to-end design for digital semantic communications is shown in Fig.~\ref{jcm}.
The JCM framework is based on the probabilistic encoder-decoder architecture of the VAE\cite{vae} and includes a joint coding-modulation block at the transmitter and two probabilistic decoders at the receiver.
The joint coding-modulation block is responsible for generating the channel input, while two probabilistic decoders recover the source data and the semantic information from the received signal.
The details are stated as follows.

At the transmitter, there is a source data denoted as a variable $\mathbf{X}\in\mathbb{R}^k$ with dimension $k$, associated with its unknown semantic information, denoted as $\mathbf{S}$, to be transmitted through a noisy channel.
The definition of semantic information $\mathbf{S}$ depends on the specific task at the receiver.
We model the source $\mathbf{X}$ as being generated by the semantic information $\mathbf{S}$ from an unknown and complicated transition probability $p(\mathbf{x}|\mathbf{s})$. 
Following the setup of semantic communications for image data \cite{liu2021semantics, zhang2022task}, the receiver needs to recover both the semantic information and the original source data, which is an abstraction of the practical requirements in many real-world scenarios where both humans and machines are involved in the task decision.
We denote the recovery of the source data as $\mathbf{\hat X}$ and the recovery of semantic information as $\mathbf{\hat S}$.
Both the transmitter and the receiver can access to a shared knowledge base, which is essentially a dataset containing different samples of the source data $\mathbf{X}$ and its corresponding semantic information $\mathbf{S}$.

The joint coding-modulation block, which consists of a probabilistic encoder-modulator and a constellation symbol generator, generates channel input $\mathbf{Z}\in\mathbb{C}^n$ from source $\mathbf{X}$, with $n$ being the number of channel uses.
Specifically, parameterized by an NN with parameters $\theta$, the probabilistic encoder-modulator is designed to learn the transition probability $p_{en}(\mathbf{z}|\mathbf{x},\theta)$. 
According to this transition probability, a channel input $\mathbf{z}$ is randomly generated and then sent through a communication channel.
As a side note, we use the Gumbel-Softmax method \cite{jang2016softgumbel} to generate differentiable constellation symbols. 
We will provide a detailed discussion on the formulation of the transition probability model and the generation of the constellation symbol sequence in Section III.

Different from most semantic communication systems, a mandatory requirement of digital modulation is placed in our system. 
We consider an $M$-order digital modulation with constellation symbols denoted as $\cal C$ $=\{c_1, c_2, ..., c_M\}$. 
That is, each element of the channel input $\mathbf{Z}$ takes values from $\cal C$ and thus we call $\mathbf{Z}$ a constellation symbol sequence.
Note that $\mathbf{Z}$ is scaled before actual transmission so that it meets an average transmit power constraint $P$, \emph{i.e.}, $P=\frac{\|\mathbf{Z}\|^2}{n}$.
We model the channel as an additive white Gaussian noise (AWGN) channel. 
As such, the received sequence $\mathbf{\hat Z}$ can be written as $\mathbf{\hat Z}=\mathbf{Z}+\boldsymbol{\varepsilon}$, where $\boldsymbol{\varepsilon}\sim \mathcal{CN}(\mathbf{0}, \sigma ^{2}\mathbf{I})$ is the channel noise, of which each element is drawn independently from a complex Gaussian distribution with zero mean and variance $\sigma^2$.
The channel condition is characterized by the channel SNR, which is defined as $\frac{P}{\sigma^2}$.

The receiver consists of two decoders to respectively reconstruct source data and semantic information. Given the received sequence $\mathbf{\hat{Z}}$, the two NNs respectively estimate the posterior distribution $p_{de,o}(\mathbf{x}|\mathbf{\hat z}, \psi)$ and $p_{de,s}(\mathbf{ s}|\mathbf{\hat z}, \phi)$, where $\psi$ and $\phi$ denote their respective NN parameters.  
Then, at the test stage, we recover the source data and the semantic information via maximum a posteriori (MAP) decoding:
\begin{align}
    \mathbf{\hat{x}} &= \underset{\mathbf{x}}{\mathrm{argmax}}  \  p_{de,o}(\mathbf{x}|\mathbf{\hat z}, \psi),\\
    \mathbf{\hat{s}} &= \underset{\mathbf{s}}{\mathrm{argmax}} \  p_{de,s}(\mathbf{s}|\mathbf{\hat z}, \phi).
\end{align}
Notably, we employ a parallel architecture to recover both $\mathbf{X}$ and $\mathbf{S}$ from $\mathbf{\hat Z}$. 
An alternative design could be a cascading architecture, where $\mathbf{X}$ is first recovered as $\mathbf{\hat X}$, followed by inferring $\mathbf{S}$ from $\mathbf{\hat X}$.
The latter would incur inevitable information loss when inferring $\mathbf{S}$ from $\mathbf{\hat X}$ due to the data processing inequality \cite[Theorem~2.8.1]{cover1999elements}. As such, we choose the former architecture.

\subsection{Objective Function}

We apply Info-Max\cite{linsker1987infomax} as the design principle for the JCM framework. The principle of Info-Max, employed in many works\cite{barber2005infomax,linsker1987infomax}, aims to maximize the mutual information. 
Specifically, our goal for the encoder-modulator NN is that it can preserve as much information about the semantic information and the source data as possible in the received sequence $\mathbf{\hat Z}$.
Therefore, we define our objective function based on mutual information, abbreviated as ``MI-OBJ'', and establish the optimization problem as:
\begin{equation}
  \underset{\theta}{\mathrm{max}} \  I_\theta(\mathbf{S};\mathbf{\hat Z}) + \lambda\cdot I_\theta(\mathbf{\mathbf{X};\hat Z}),
    \label{obj}
\end{equation}
where $\lambda$ is a trade-off hyperparameter to balance the importance of the two mutual information terms, and $\theta$ represents the parameters of the probabilistic encoder-modulator. 
As a remark, the above objection function is different from the objective function of the information bottleneck (IB) principle \cite{tishby2000information} as applied in \cite{zhang2021bottleneck}.
The IB principle aims to find the best coding scheme of the source data that preserves maximal semantic information under the constraint of the code rate.
Therefore, its objective function is the subtraction of two mutual information terms, whereas ours is the sum of two mutual information terms.

To use MI-OBJ as our loss function, however, there are two main challenges. 
First, MI-OBJ is hard to estimate due to the mutual information terms. 
These high dimensional variables with complicated distributions make it hard to estimate their joint distributions and marginal distributions, which consequently makes the direct optimization of MI-OBJ difficult.
Second, MI-OBJ only considers the optimization of the encoder-modulator parameters $\theta$. 
However, the decoders at the receiver require joint training with the encoder-modulator at the transmitter. 
Thus, we need to design a loss function that takes the optimization of the two decoders $p_{de,o}(\mathbf{x}|\mathbf{\hat z}, \psi)$ and $p_{de,s}(\mathbf{s}|\mathbf{\hat z}, \phi)$ into consideration as well.

Therefore, MI-OBJ cannot be directly used as the loss function for our NNs. A solution is hence given in Section III, which gives a lower bound of MI-OBJ for the joint optimization of the encoder-modulator and the decoders, so that the constellation symbol sequence that preserves the most information can be decoded by the optimal decoder. 

\section{Variational Learning of The JCM Framework}

This section focuses on the training process of the JCM framework.
As mentioned, in JCM we need to optimize the probabilities of ${p_{en}(\mathbf{z}|\mathbf{x}, \theta)}$, $p_{de,o}(\mathbf{ x}|\mathbf{\hat z}, \psi)$ and $p_{de,s}(\mathbf{s}|\mathbf{\hat z}, \phi)$, which falls in the field of variational learning. 
Consequently, new loss function and different gradient estimation methods are required for the variational learning of these probabilities.

\subsection{Loss Function Design Based on  Variational Inference Lower Bound}

In this subsection we derive a general loss function for the training of the JCM framework to jointly optimize the the transition probability ${p_{en}(\mathbf{z}|\mathbf{x}, \theta)}$ at the transmitter, and the probabilities $p_{de,o}(\mathbf{ x}|\mathbf{\hat z}, \psi), p_{de,s}(\mathbf{s}|\mathbf{\hat z}, \phi)$ at the receiver. 
In general, we use an iterative strategy for this joint training process.  
Specifically, we first fix the transition probability of the encoder-modulator NN ${p_{en}(\mathbf{z}|\mathbf{x}, \theta)}$, and then update the probabilities at the receiver $p_{de,o}(\mathbf{ x}|\mathbf{\hat z}, \psi), p_{de,s}(\mathbf{s}|\mathbf{\hat z}, \phi)$. 
These probabilities should be updated in a way to approach the true posterior probabilities $p(\mathbf{ x}|\mathbf{\hat z}), p(\mathbf{s}|\mathbf{\hat z})$ under the fixed ${p_{en}(\mathbf{z}|\mathbf{x}, \theta)}$.
According to the performance of the decoder NNs updated in the last step, we then update the encoder-modulator NN at the transmitter.
These two steps together make up a full training epoch. 
We find that updating the NN at the transmitter is relatively easy, as long as we can find the best decoder under this given transition probability ${p_{en}(\mathbf{z}|\mathbf{x}, \theta)}$. 
Therefore, our main problem is to find a way to train the decoding NNs so that the NN parameterized posterior distributions approach the true posterior distributions. Thus, our efforts mainly focus on designing a reasonable and efficient loss function for the NNs at the receiver.

To address this problem, we apply the technique of variational inference. Variational inference is a technique used in machine learning to approximate complex posterior probability distributions in Bayesian models for which exact inference is computationally intractable\cite{blei2017variational}. 
In JCM, we fit an approximate inference model to the true posterior distributions $p(\mathbf{ x}|\mathbf{\hat z}), p(\mathbf{s}|\mathbf{\hat z})$ and then use variational inference to derive a tractable lower bound of MI-OBJ that includes a distance measure between the true posterior distributions and the NN parameterized posterior distributions $p_{de,o}(\mathbf{ x}|\mathbf{\hat z}, \psi)$, $p_{de,s}(\mathbf{s}|\mathbf{\hat z}, \phi)$.
This lower bound is stated in Theorem \ref{th1}.

\begin{theorem}[VILB] \label{th1}
A variational inference lower bound of MI-OBJ is given by \eqref{elbo}, shown at the top of the next page, where $K=H(\mathbf{S})+\lambda\cdot H(\mathbf{X})$ is a constant.
\end{theorem}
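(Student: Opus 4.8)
The plan is to apply the Barber--Agakov variational bound \cite{barber2005infomax} separately to each of the two mutual information terms in MI-OBJ and then combine them. The starting point is the decomposition $I_\theta(\mathbf{S};\mathbf{\hat Z}) = H(\mathbf{S}) - H(\mathbf{S}|\mathbf{\hat Z})$. Since the source $\mathbf{S}$ is fixed by the shared knowledge base, its marginal entropy $H(\mathbf{S})$ does not depend on the encoder parameters $\theta$; the same holds for $H(\mathbf{X})$. This observation is exactly what lets us collect these two entropies into the constant $K = H(\mathbf{S}) + \lambda \cdot H(\mathbf{X})$, leaving only the two conditional-entropy terms to be bounded.

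The key step is to introduce the NN-parameterized posterior $p_{de,s}(\mathbf{s}|\mathbf{\hat z}, \phi)$ as a variational surrogate for the intractable true posterior $p(\mathbf{s}|\mathbf{\hat z})$. First I would rewrite $H(\mathbf{S}|\mathbf{\hat Z}) = -\mathbb{E}_{p(\mathbf{s},\mathbf{\hat z})}[\log p(\mathbf{s}|\mathbf{\hat z})]$ and then add and subtract $\log p_{de,s}(\mathbf{s}|\mathbf{\hat z}, \phi)$ inside the expectation. This splits the conditional entropy exactly into the cross-entropy term $-\mathbb{E}_{p(\mathbf{s},\mathbf{\hat z})}[\log p_{de,s}(\mathbf{s}|\mathbf{\hat z}, \phi)]$ plus an averaged divergence $\mathbb{E}_{p(\mathbf{\hat z})}[\mathrm{KL}[p(\mathbf{s}|\mathbf{\hat z})\,||\,p_{de,s}(\mathbf{s}|\mathbf{\hat z}, \phi)]]$. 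Invoking the non-negativity of the KL divergence then yields $H(\mathbf{S}|\mathbf{\hat Z}) \le -\mathbb{E}_{p(\mathbf{s},\mathbf{\hat z})}[\log p_{de,s}(\mathbf{s}|\mathbf{\hat z}, \phi)]$, and hence $I_\theta(\mathbf{S};\mathbf{\hat Z}) \ge H(\mathbf{S}) + \mathbb{E}_{p(\mathbf{s},\mathbf{\hat z})}[\log p_{de,s}(\mathbf{s}|\mathbf{\hat z}, \phi)]$.

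Repeating the identical argument for $I_\theta(\mathbf{X};\mathbf{\hat Z})$ with the surrogate $p_{de,o}(\mathbf{x}|\mathbf{\hat z}, \psi)$, scaling the result by $\lambda$, and summing the two bounds gives the claimed inequality \eqref{elbo}: the two decoder expectations form the trainable part of the objective, while $K$ collects the parameter-independent source entropies. The bound is tight precisely when both variational posteriors match their true counterparts, i.e.\ when both averaged KL terms vanish, which is the sense in which maximizing the right-hand side simultaneously drives the decoders toward the true posteriors and the encoder toward information preservation.

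The main obstacle I anticipate is not the inequality itself, which is textbook once set up, but making the joint law $p(\mathbf{s},\mathbf{\hat z})$ precise: it arises by marginalizing the chain $p(\mathbf{s})\,p(\mathbf{x}|\mathbf{s})\,p_{en}(\mathbf{z}|\mathbf{x},\theta)$ composed with the AWGN channel $p(\mathbf{\hat z}|\mathbf{z})$, so the expectations depend on $\theta$ implicitly through $\mathbf{\hat Z}$. Care is needed to confirm that the entropies pulled out as $K$ are genuinely the $\theta$-independent marginals $H(\mathbf{S})$ and $H(\mathbf{X})$ rather than quantities coupled to the encoder, and to check that the mixed discrete-continuous structure (discrete constellation symbols $\mathbf{Z}$ feeding a continuous $\mathbf{\hat Z}$) poses no measurability issue that would invalidate the add-and-subtract manipulation inside the expectation.
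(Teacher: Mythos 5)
Your proposal is correct and follows essentially the same route as the paper: the paper's Appendix A likewise applies the Barber--Agakov-style bound to each mutual information term by inserting $p_{de,s}(\mathbf{s}|\mathbf{\hat z},\phi)$ and $p_{de,o}(\mathbf{x}|\mathbf{\hat z},\psi)$ as variational posteriors, splitting off an averaged KL divergence and dropping it by non-negativity, and then expanding $p(\mathbf{\hat z})$ through the Markov chain $\mathbf{S}\to\mathbf{X}\to\mathbf{Z}\to\mathbf{\hat Z}$ to expose the dependence on $\theta$ --- exactly the step you flag at the end as needing care. The only cosmetic difference is that the paper isolates the first part as a standalone lemma before performing the chain expansion.
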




To prove Theorem \ref{th1}, first we expand the decoder probabilities $p_{de,s}(\mathbf{s}|\mathbf{\hat z}, \phi)$ and $p_{de,o}(\mathbf{x}|\mathbf{\hat z}, \psi)$ as variational approximations to the true posterior distributions, and merge them into MI-OBJ. In this way, for each mutual information term we get a lower bound, respectively
\begin{align}
\setcounter{equation}{4}
\mathbb{E}_{p(\mathbf{\hat z})}\mathbb{E}_{p(\mathbf{s}|\mathbf{\hat z})} \log p_{de,s}(\mathbf{ s}|\mathbf{\hat z}, \phi) + H(\mathbf{S}),
\end{align}
and 
\begin{equation}
\mathbb{E}_{p(\mathbf{\hat z})} \mathbb{E}_{p(\mathbf{x}|\mathbf{\hat z})} \log p_{de,o}(\mathbf{ x}|\mathbf{\hat z}, \psi) + H(\mathbf{X}).
\end{equation}
Then, we obtain VILB by using the Markov chain to expand $p(\mathbf{\hat{z}})$.
The complete proof of Theorem \ref{th1} can be found in Appendix A.

Theorem \ref{th1} gives an operational lower bound of MI-OBJ, which we define as the general loss function $\mathcal{L}_{gen}(\theta, \phi, \psi)$. Through maximizing the lower bound, the NNs, namely the parameters of $\theta$, $\phi$ and $\psi$, can be trained over the optimization process. 
Specifically, in Theorem \ref{th1}, MI-OBJ is lower bounded by the sum of a constant and an expectation of two terms $\mathbb{E}_{p(\mathbf{s}|\mathbf{\hat z})} \log p_{de,s}(\mathbf{ s}|\mathbf{\hat z}, \phi)$ and $\mathbb{E}_{p(\mathbf{x}|\mathbf{\hat z})} \log p_{de,o}(\mathbf{ x}|\mathbf{\hat z}, \psi)$. 
These two terms serve as a distance measure between the true posterior distribution and the NN parameterized approximate posterior distribution.
As the true posterior distribution is fixed, the maximization of these two terms enables the approximate posterior distributions to keep approaching their respective true posterior distribution, namely the best possible decoder.
When the approximate posterior distribution is equal to the true posterior distribution, the equality in \eqref{elbo} will hold and VILB will reach its maximum.
Therefore, through maximizing VILB, the encoder-modulator can learn to maximize the mutual information, and the decoders can learn to better extract information from the received sequence and therefore improve their decoding performances.

As a remark, VILB is indeed tractable. 
In the training stage when we have the true semantic information and the true source data, we can use the empirical distribution to substitute the true posterior distribution. 
The approximate posterior distributions are obtained through the decoder NNs. 
Consequently, VILB can be computed using these probabilities. 
On a side note, we can use sampling methods such as Monte Carlo sampling\cite{andrieu2003mcmc} to estimate the expectation. 
Therefore, we use VILB as our tractable loss function for the joint optimization of the encoder-modulator and the two decoders.

\subsection{Transition Probability Model of The Encoder-Modulator}

In JCM, the constellation symbol sequence is generated from the transition probability, which is parameterized by the encoder-modulator NN. However, for an $n$-length constellation symbol sequence of an $M$-order modulation, there are a total of $M^n$ categories of probability that need to be learned.
To simplify the learning, we therefore introduce a model of the transition probability in this subsection.

\begin{figure*}[t]
\setcounter{equation}{3}
\begin{align}
I_\theta(\mathbf{S};\mathbf{\hat Z}) + \lambda\cdot I_\theta(\mathbf{\mathbf{X};\hat Z}) &\ge 
\mathbb{E}_{p(\mathbf{\hat z}|\mathbf{z})}\mathbb{E}_{p_{en}(\mathbf{z}|\mathbf{x}, \theta)}\mathbb{E}_{p(\mathbf{s},\mathbf{x})} \lbrace \mathbb{E}_{p(\mathbf{s}|\mathbf{\hat z})} \log p_{de,s}(\mathbf{s}|\mathbf{\hat z}, \phi)
+\lambda\cdot \mathbb{E}_{p(\mathbf{x}|\mathbf{\hat z})} \log p_{de,o}(\mathbf{ x}|\mathbf{\hat z}, \psi)\rbrace + K\nonumber\\ &\overset{\underset{\mathrm{def}}{}}{=}\mathcal{L}_{gen}(\theta, \phi, \psi)
    \label{elbo}
\end{align}
\setcounter{equation}{9}
\begin{equation}
    p_{en}(\mathbf{z}|\mathbf{x}, \theta)=\textstyle\prod_{i=1}^{n} \textstyle\prod_{r=-\frac{\sqrt{M}}{2}}^{\frac{\sqrt{M}}{2}-1}\textstyle\prod_{s=-\frac{\sqrt{M}}{2}}^{\frac{\sqrt{M}}{2}-1}( q_{ir|\mathbf{x},\theta}^{I})^{\mathbb{I}{\left\lbrace z_{Ii}=\frac{2r+1}{\sqrt{M}-1}\right\rbrace}} \cdot (q_{is|\mathbf{x},\theta}^{Q})^{\mathbb{I}{\left\lbrace z_{Qi}=\frac{2s+1}{\sqrt{M}-1}\right\rbrace}}.
    \label{mqam}
\end{equation}
\setcounter{equation}{12}
\begin{align}
\mathcal{L}(\theta, \phi, \psi;\mathbf{X}^N,\mathbf{S}^N)=
    \frac{1}{N}\sum_{n=1}^{N} \Biggl[\Biggl(  \sum_{l=1}^{L} p(\mathbf{s}_n=l|\mathbf{\hat{z}}_\theta^{n})\log p_{de,s}(\mathbf{ s}_n=l|\mathbf{\hat z}_\theta^{n}, \phi) \Biggr)
    -  \lambda \cdot  ||\mathbf{x}_n- f_\psi(\mathbf{\hat z}_\theta^{n})||_2^2  \Biggr]
    \label{empirical obj}
\end{align}
\hrulefill
\end{figure*}

We formulate the general transition probability model $p_{en}(\mathbf{z}|\mathbf{x}, \theta)$ of an $M$-order modulation as such. For an $n$-length constellation symbol sequence $\mathbf{Z}=(Z_1, Z_2, ..., Z_n)$, as a common simplification\cite{nesct}, we model the components as conditionally independent variables, where a component $Z_i$ takes values in the $M$ symbols of the constellation $\cal C$. 
This assumption decreases the number of probability categories to be learnt from $M^n$ to $M\cdot n$, greatly reducing the computation complexity.
The NN outputs the categorical distribution of each symbol $Z_i$, and we denote the probability of $Z_i$ being one of the constellation symbol $c_m\in \cal C$ as:
\begin{align}
\setcounter{equation}{6}
q_{im|\mathbf{x},\theta}=P(Z_i=c_m|\mathbf{x}, \theta).
\label{norm}
\end{align}
Due to the assumed conditional independence, the joint probability mass function (PMF) of the encoder-modulator can be expressed as:
\begin{equation}
    p_{en}(\mathbf{z}|\mathbf{x}, \theta)=\textstyle\prod_{i=1}^{n}p(z_i|\mathbf{x}, \theta)=\textstyle\prod_{i=1}^{n} \textstyle\prod_{m=1}^{M} q_{im|\mathbf{x},\theta}^{\mathbb{I}\left\lbrace z_i=c_m \right\rbrace},
    \label{generalpmf}
\end{equation}
where $\mathbb{I}{\left\lbrace\cdot\right\rbrace}$ denotes the indicator function.



Since BPSK with $\cal{C}$ $=\lbrace+1, -1\rbrace$ and rectangular M-QAM with $\cal{C}$ $=\lbrace \frac{2r+1}{\sqrt{M}-1} + j \cdot\frac{2s+1}{\sqrt{M}-1}\rbrace, r,s=\frac{-\sqrt{M}}{2},-\frac{\sqrt{M}}{2}+1,...,\frac{\sqrt{M}}{2}-1$ (we use $j$ to denote the imaginary unit) are two commonly used modulation schemes, we give the transition probability of BPSK and rectangular M-QAM as two special cases in the following propositions.
\begin{proposition}[Transition probability of BPSK]
\label{bpskCorollary}
Let $q_{i|\mathbf{x},\theta}$ denote the NN output probability of $Z_i=1$. Then, the PMF of the probabilistic encoder-modulator of BPSK is:
\begin{equation}
    p_{en}(\mathbf{z}|\mathbf{x}, \theta)=\textstyle\prod_{i=1}^{n} (q_{i|\mathbf{x},\theta}) ^{\frac{1}{2}(1+z_i)}(1- q_{i|\mathbf{x},\theta}) ^{\frac{1}{2}(1-z_i)}.  \label{bpsk}
\end{equation}
\end{proposition}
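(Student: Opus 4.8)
The plan is to obtain \eqref{bpsk} as the $M=2$ specialization of the general joint PMF in \eqref{generalpmf}, where essentially the only work is rewriting the two indicator exponents in closed form. First I would set $M=2$ and label the two points of $\mathcal{C}=\{+1,-1\}$ as $c_1=+1$ and $c_2=-1$. With the notation of \eqref{norm}, the per-symbol probabilities are $q_{i1|\mathbf{x},\theta}=P(Z_i=+1|\mathbf{x},\theta)$ and $q_{i2|\mathbf{x},\theta}=P(Z_i=-1|\mathbf{x},\theta)$; since for each $i$ these are the parameters of a categorical distribution over the two symbols, normalization forces $q_{i2|\mathbf{x},\theta}=1-q_{i1|\mathbf{x},\theta}$. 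Writing $q_{i|\mathbf{x},\theta}\triangleq q_{i1|\mathbf{x},\theta}$ for the probability of the ``$+1$'' symbol, substitution into \eqref{generalpmf} gives
\[
p_{en}(\mathbf{z}|\mathbf{x},\theta)=\textstyle\prod_{i=1}^{n}(q_{i|\mathbf{x},\theta})^{\mathbb{I}\{z_i=+1\}}(1-q_{i|\mathbf{x},\theta})^{\mathbb{I}\{z_i=-1\}}.
\]

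The remaining step, which is the heart of the argument, is to replace the indicator exponents by affine functions of $z_i$. Because the support of each $z_i$ is exactly the two-point set $\{+1,-1\}$, I would verify the pair of identities $\mathbb{I}\{z_i=+1\}=\tfrac12(1+z_i)$ and $\mathbb{I}\{z_i=-1\}=\tfrac12(1-z_i)$ by direct evaluation at the two admissible values: at $z_i=+1$ the right-hand sides equal $1$ and $0$, while at $z_i=-1$ they equal $0$ and $1$, matching the corresponding indicators in each case. Substituting these equalities into the exponents above yields precisely \eqref{bpsk}.

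I do not anticipate a genuine obstacle, as the statement is a direct corollary of \eqref{generalpmf}; the one point demanding care is that the affine rewriting of the indicators is valid \emph{only} because the constellation amplitudes are fixed at $\pm1$. Any other antipodal pair would require rescaling $z_i$ before the identities $\mathbb{I}\{z_i=\pm1\}=\tfrac12(1\pm z_i)$ could hold, so the proof genuinely relies on the normalization $\mathcal{C}=\{+1,-1\}$ assumed in the statement rather than on any deeper property of the encoder-modulator.
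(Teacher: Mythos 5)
Your proposal is correct and follows essentially the same route as the paper's Appendix B: specialize the general PMF \eqref{generalpmf} to $M=2$ with $c_1=+1$, $c_2=-1$, use normalization to write $q_{i2|\mathbf{x},\theta}=1-q_{i|\mathbf{x},\theta}$, and rewrite the indicator exponents as $\tfrac12(1\pm z_i)$. Your explicit verification of the indicator identities and the remark that they rely on the $\pm1$ normalization are fine additions but do not change the argument.
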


\begin{proposition}[Transition probability of M-QAM]
\label{qamCorollary}
Let $q_{ir|\mathbf{x},\theta}^{I}$ and $q_{is|\mathbf{x},\theta}^{Q}$ respectively denote the probability of I channel amplitude $Z_{Ii}=\frac{2r+1}{\sqrt{M}-1}$ and Q channel amplitude $Z_{Qi}=\frac{2s+1}{\sqrt{M}-1}$, where $Z_i = Z_{Ii} + j\cdot Z_{Qi}$. The PMF of the probabilistic encoder-modulator of rectangular M-QAM can be written as \eqref{mqam} at the top of this page.

\end{proposition}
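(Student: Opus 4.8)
The plan is to derive the M-QAM transition probability in \eqref{mqam} directly from the general per-symbol PMF in \eqref{generalpmf}, by exploiting the fact that a rectangular M-QAM constellation is the Cartesian product of two $\sqrt{M}$-ary PAM constellations along the in-phase and quadrature axes. Since \eqref{generalpmf} already factorizes the joint PMF across the $n$ channel uses, it suffices to refine the inner product over the $M$ symbols of $\mathcal{C}$ into a product over the in-phase and quadrature amplitude levels.

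First I would introduce the bijection between the symbol index $m\in\{1,\dots,M\}$ and the amplitude-index pair $(r,s)$, where $r,s\in\{-\tfrac{\sqrt{M}}{2},\dots,\tfrac{\sqrt{M}}{2}-1\}$, using the unique decomposition $c_m=\frac{2r+1}{\sqrt{M}-1}+j\cdot\frac{2s+1}{\sqrt{M}-1}$. Re-indexing the inner product $\prod_{m=1}^{M}$ in \eqref{generalpmf} then turns it into a double product over the $\sqrt{M}\times\sqrt{M}$ grid of levels. Next, I would factor the two quantities appearing in that product. The event $\{Z_i=c_m\}$ is exactly the intersection $\{Z_{Ii}=\frac{2r+1}{\sqrt{M}-1}\}\cap\{Z_{Qi}=\frac{2s+1}{\sqrt{M}-1}\}$, so its indicator factors as a product of an in-phase indicator and a quadrature indicator; and by the I/Q independence of the encoder-modulator the symbol probability factors as $q_{im|\mathbf{x},\theta}=q_{ir|\mathbf{x},\theta}^{I}\cdot q_{is|\mathbf{x},\theta}^{Q}$. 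Substituting both factorizations and separating the product over levels into its in-phase and quadrature parts yields \eqref{mqam}.

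The step requiring the most care is the factorization $q_{im|\mathbf{x},\theta}=q_{ir|\mathbf{x},\theta}^{I}\cdot q_{is|\mathbf{x},\theta}^{Q}$, which is precisely the assumption that, conditioned on $\mathbf{x}$ and $\theta$, the in-phase and quadrature amplitude levels of each symbol are drawn independently. I would state this explicitly as the defining modeling choice of the I/Q-separable encoder-modulator for rectangular M-QAM; it is the natural counterpart of the conditional independence across channel uses already invoked in \eqref{generalpmf}, and it reduces the number of per-symbol probabilities the NN must learn from $M$ to $2\sqrt{M}$. Once this factorization is granted, the remaining manipulations are purely combinatorial bookkeeping over the indicator functions. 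Finally, the BPSK case in Proposition \ref{bpskCorollary} follows from the same scheme specialized to a single real dimension with $M=2$, upon observing that the exponents $\tfrac{1}{2}(1+z_i)$ and $\tfrac{1}{2}(1-z_i)$ are just the indicators $\mathbb{I}\{z_i=1\}$ and $\mathbb{I}\{z_i=-1\}$ written in closed form.
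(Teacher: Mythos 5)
Your proposal is correct and follows essentially the same route as the paper's proof: both factor the per-symbol probability $q_{im|\mathbf{x},\theta}=q_{ir|\mathbf{x},\theta}^{I}\cdot q_{is|\mathbf{x},\theta}^{Q}$ via the assumed conditional independence of the I and Q components, factor the indicator $\mathbb{I}\{z_i=c_m\}$ into the product of the I and Q indicators, and substitute both into \eqref{generalpmf}. Your explicit statement of the bijection $m\leftrightarrow(r,s)$ and of the I/Q independence as the defining modeling assumption is only a slightly more careful write-up of the same argument.
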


For BPSK, the transition probability degrades to a Bernoulli distribution. For M-QAM, we consider the I channel and the Q channel to be conditionally independent as well. Therefore, for each $Z_i$, we only need to learn two probability distributions with $\sqrt{M}$ categories instead of one probability distribution with $M$ categories, which considerably decreases the computation complexity. The proof for Proposition \ref{bpskCorollary} and \ref{qamCorollary} can be found in Appendix B and C, respectivley.
Need to mention that, our loss function can be generalized to arbitrary constellation maps, including non-uniform constellation maps.

\subsection{Loss Function for Image Semantic Communications}

VILB derived in Section III-A is a general lower bound for MI-OBJ. In this subsection, based on this general lower bound, we derive a specific loss function for image semantic communications. We first introduce some widely recognized assumptions of the image source data. Then, we derive the loss function by applying these assumptions to VILB and replacing the probability function with the empirical distribution.

The assumptions we take are as follows. As pointed out in \cite{nesct}, for image source, the source data $\mathbf{X}$ and its reconstruction $\mathbf{\hat{X}}$ can be modeled as multivariate factorized Gaussian variables with isotropic covariance. Thus, we can assume that 
\begin{align}
\setcounter{equation}{10}
  &   p(\mathbf{x|\hat z})= \mathcal{N}(\boldsymbol{\mu},\sigma_1^2\mathbf{I}_{k\times k}) \label{true posterior},\\
&p_{de,o}(\mathbf{\hat x|\hat z},\psi)= \mathcal{N}(f_{\psi}(\mathbf{\hat z}),\sigma_2^2\mathbf{I}_{k\times k}), \label{variational}
\end{align}
where $\boldsymbol{\mu}$ is the true pixel value of the source data (usually normalized between $0$ and $1$); $\sigma_1$ and $\sigma_2$ are precision parameters treated as constants\cite{nesct}; $k$ represents the dimension of the image data $\mathbf{X}$. The decoder NN outputs the mean $f_{\psi}(\mathbf{\hat z})$ of the Gaussian distributed approximate posterior distribution.
As for the semantic information $\mathbf{S}$, we set the image classification labels as the semantic information, which makes $\mathbf{S}$ $\in\{1,...,L\}$  a discrete variable where $L$ represents the number of categories.
Bringing these additional assumptions into VILB, we further derive a loss function for image semantic communications as a corollary of Theorem \ref{th1}, where we denote $(\mathbf{X}^N,\mathbf{S}^N)=\left \lbrace(\mathbf{x}_n, \mathbf{s}_n)\right\rbrace_{n=1}^{N}$ as the batch of training data with $N$ being the number of training samples in the dataset.
\begin{corollary}[Loss function for image semantic communications]
\label{corollaryImage}
Consider image semantic communications with $L$-class classification labels as the semantic information. Let $\mathbf{\hat z}_\theta^{n}$ be the received sequence for the $n$th training sample. Then, the loss function for image semantic communications can be written as \eqref{empirical obj} at the top of this page.
\end{corollary}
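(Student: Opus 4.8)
The plan is to specialize the general bound VILB from Theorem~\ref{th1} to the image setting by substituting the distributional assumptions \eqref{true posterior}--\eqref{variational} together with the discreteness of the label $\mathbf{S}$, and then replacing every true distribution by its empirical counterpart over the training batch. Since maximizing VILB and maximizing the target loss $\mathcal{L}$ differ only by additive and multiplicative constants that do not move the optimizer, the task reduces to rewriting the two inner expectations in \eqref{elbo} as, respectively, the cross-entropy term and the squared-error term of \eqref{empirical obj}.

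First I would handle the semantic term. Because $\mathbf{S}\in\{1,\dots,L\}$ is discrete, the inner expectation $\mathbb{E}_{p(\mathbf{s}|\mathbf{\hat z})}\log p_{de,s}(\mathbf{s}|\mathbf{\hat z},\phi)$ collapses to the finite sum $\sum_{l=1}^{L} p(\mathbf{s}=l|\mathbf{\hat z})\log p_{de,s}(\mathbf{s}=l|\mathbf{\hat z},\phi)$, which is precisely the (negative) cross-entropy between the true and the NN-parameterized label posteriors and already matches the first bracketed term of \eqref{empirical obj}.

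Next I would handle the source term. Inserting the Gaussian form \eqref{variational} gives $\log p_{de,o}(\mathbf{x}|\mathbf{\hat z},\psi)=-\tfrac{1}{2\sigma_2^2}\|\mathbf{x}-f_\psi(\mathbf{\hat z})\|_2^2-\tfrac{k}{2}\log(2\pi\sigma_2^2)$. Carrying out the inner expectation $\mathbb{E}_{p(\mathbf{x}|\mathbf{\hat z})}[\cdot]$ against the true posterior \eqref{true posterior} (whose mean $\boldsymbol{\mu}$ is the true pixel value) leaves $-\tfrac{1}{2\sigma_2^2}\|\boldsymbol{\mu}-f_\psi(\mathbf{\hat z})\|_2^2$ up to the additive constant $-\tfrac{k\sigma_1^2}{2\sigma_2^2}$ produced by the variance of $\mathbf{x}$. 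Because $\sigma_1,\sigma_2$ are treated as fixed, the prefactor $\tfrac{1}{2\sigma_2^2}$ may be folded into the trade-off parameter $\lambda$ and the additive constants into $K$, so the term reduces to $-\lambda\|\mathbf{x}-f_\psi(\mathbf{\hat z})\|_2^2$, matching the second term of \eqref{empirical obj}.

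Finally I would pass to the empirical objective: the outer expectation $\mathbb{E}_{p(\mathbf{s},\mathbf{x})}$ becomes the batch average $\tfrac{1}{N}\sum_{n=1}^{N}$ over the pairs $(\mathbf{x}_n,\mathbf{s}_n)$ (identifying $\boldsymbol{\mu}$ with $\mathbf{x}_n$), while the channel expectations $\mathbb{E}_{p(\mathbf{\hat z}|\mathbf{z})}\mathbb{E}_{p_{en}(\mathbf{z}|\mathbf{x},\theta)}$ are estimated by a single Monte Carlo draw, which is exactly what the notation $\mathbf{\hat z}_\theta^{n}$ for the $n$th received sequence encodes. Discarding the now-constant terms then yields \eqref{empirical obj}. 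The main obstacle is not analytical difficulty but careful bookkeeping: I must verify that, under the stated assumptions, the precision parameters $\sigma_1,\sigma_2$ enter only as an overall rescaling of the reconstruction term and an additive constant, so that absorbing them into $\lambda$ and $K$ is legitimate and leaves the maximizer of the objective unchanged.
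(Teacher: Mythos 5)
Your proposal is correct and follows essentially the same route as the paper's Appendix D: specialize VILB with the Gaussian assumptions \eqref{true posterior}--\eqref{variational} so the source term reduces to a scaled MSE plus constants (the paper evaluates the same Gaussian expectation via the trace identity, you via the standard moment formula --- equivalent computations), collapse the discrete semantic term to the cross-entropy sum, and replace the remaining expectations by empirical averages with the constants absorbed into $\lambda$ and $K$.
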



The loss function \eqref{empirical obj} is a weighted sum of two terms. The first term is the cross entropy (CE), a commonly used loss function for image classification, which measures the distance between the true posterior distribution and the NN parameterized approximate posterior for the semantic information. The second term is the mean square error (MSE) between the true source image and the recovered source image. By minimizing the distance between the NN parameterized approximate posterior distributions and the true posterior distributions using CE and MSE, the decoders can better infer the semantic information and the source data from the received sequence $\mathbf{\hat z}_\theta^{n}$ respectively.

The derivation of Corrollary \ref{corollaryImage} consists of two steps. First, we apply the assumptions of the semantic information and the source data into VILB. Since in this scenario the semantic information is the classification label and thus a discrete variable, the corresponding term in VILB becomes the commonly used CE loss. As for the source data, the term $\mathbb{E}_{p(\mathbf{x}|\mathbf{\hat z})} \log p_{de,o}(\mathbf{ x}|\mathbf{\hat z}, \psi)$ between two Gaussian distributions can be reduced to the MSE.
Then, we use law of large numbers to replace the probability distributions with the empirical distributions, and get the loss function in \eqref{empirical obj}. Through maximizing this loss function, or equivalently minimizing its negative, the NN parameters of $\theta$, $\phi$ and $\psi$ can be optimized in the training process. The complete proof of Corrollary \ref{corollaryImage} can be found in Appendix D.

\subsection{Differentiable Constellation Symbol Generation}

Using the SGD algorithm to optimize the NN parameters $\theta$, $\phi$ and $\psi$ requires the estimation of their gradients for the loss function. 
For the decoder parameters $\phi,\psi$, by interchanging the differentiation and expectation in the general loss function in \eqref{elbo}, we can directly use the Monte Carlo estimator to estimate their gradients\cite{mohamed2020monte}. However, estimating the gradients of the encoder-modulator parameters $\theta$ causes problems,
since $\theta$ defines the distribution that is integrated in the expectation, which prevents the direct application of the Monte Carlo estimator.
To further explain this point, we rewrite the general loss function as $\mathbb{E}_{ p(\mathbf{x}, \mathbf{s})}\mathbb{E}_{ p_{en}(\mathbf{z}|\mathbf{x},\theta)}\left\lbrack h_{\phi, \psi}(\mathbf{z})\right\rbrack$, where $h_{\phi, \psi}(\mathbf{z})$ denotes the function of $\mathbf{z}$ that is not dependant on $\theta$. Therefore, the gradients with respect to $\theta$ can be written as $\nabla_\theta\mathbb{E}_{ p(\mathbf{x}, \mathbf{s})}\mathbb{E}_{ p_{en}(\mathbf{z}|\mathbf{x},\theta)}\left\lbrack h_{\phi, \psi}(\mathbf{z})\right\rbrack$.
The general principle is to rewrite such gradients in a form that allows for the Monte Carlo estimator, so that they can be easily and effectively computed \cite{mohamed2020monte}.
The commonly used score function estimator solves this problem by using the identity $\nabla_\theta p_{en}(\mathbf{z}|\mathbf{x},\theta)=p_{en}(\mathbf{z}|\mathbf{x},\theta) \nabla_\theta \log p_{en}(\mathbf{z}|\mathbf{x},\theta)$. Apply this identity, and we can get
\begin{align}
\setcounter{equation}{13}
    &\nabla_\theta\mathbb{E}_{ p(\mathbf{x}, \mathbf{s})}\mathbb{E}_{ p_{en}(\mathbf{z}|\mathbf{x},\theta)}\left\lbrack h_{\phi, \psi}(\mathbf{z})\right\rbrack \nonumber\\ &= \mathbb{E}_{ p(\mathbf{x}, \mathbf{s})}\mathbb{E}_{ p_{en}(\mathbf{z}|\mathbf{x},\theta)}\left\lbrack h_{\phi, \psi}(\mathbf{z})\nabla_\theta \log p_{en}(\mathbf{z}|\mathbf{x},\theta)  \right\rbrack.
\end{align}
Then, this expectation can be stochastically approximated by the Monte Carlo estimator. However, this method suffers from high variance as well as slow convergence \cite{jang2016softgumbel}, and is often used with other variance reduction techniques\cite{mnih2014varReduction2,gregor2014varRedunction1}, which increases the computation complexity of the training.


The reparameterization trick popularized in VAE\cite{vae} provides a simple yet effective solution to the gradient estimation of $\theta$.
In VAE, to update the parameters of a Gaussian variable whose distribution is integrated in the expectation, the authors reparameterize the Gaussian variable as a deterministic function of an independent random variable and the parameters to be updated. In this way, an unbiased gradient estimator for the parameters can be obtained and easily optimized, enabling the backpropagation through one sample of the Gaussian variable.
Empirical results show that the reparameterization trick effectively reduces the estimation variance\cite{vae}, and allows for the efficient training of the NN.

We adopt the reparameterization trick to generate the constellation symbols that involves a method called the Gumbel-Softmax method\cite{jang2016softgumbel}.
Unlike Gaussian variables in VAE, constellation symbols in our considered digital semantic communications are discrete variables with a categorical distribution.
The Gumbel-Softmax method allows us to generate constellation symbol sequences during the forward propagation and estimate the gradients with respect to the encoder-modulator NN parameters during backpropagation. 
Specifically, during the forward propagation, we use the Gumbel-Max sampling method\cite{gumbel954gumbelproof} to reparameterize the sampling process. 
During backpropagation,  we replace the non-differentiable $\mathrm{argmax}$ function used in Gumbel-Max with the differentiable $\mathrm{softmax}$ function for gradient estimation.
The general Gumbel-Max sampling method, which reparameterizes a discrete variable using an independent Gumbel variable, can be summarized as follows:
Consider a \text{one-hot} variable $\mathbf{t}=(t_1,...,t_a)\in\left\lbrace0, 1\right\rbrace^{a}$ with $\sum_{i=1}^a t_i=1$ where the element 1 indicates different categories. Its categorical probability distribution is denoted by $\mathbf{\pi}=(\pi_1, ..., \pi_a)$. Let $\tau_i$ denote a random variable with Gumbel distribution $\tau_{i}\sim \mathrm{Gumbel}(0,1)$. Variable $\mathbf{t}$ can be sampled from $\mathbf{\pi}$ via:
\begin{equation}
    \mathbf{t} = \text{one-hot}\left(\underset{i\in\left\lbrace1,...,a\right\rbrace}{\mathrm{argmax}}\left\lbrack \tau_i + \log \pi_i\right\rbrack\right),
\end{equation}
where $\text{one-hot}(x)$ converts the index $x$ into its one-hot form, in which the $x$-th element is equal to 1, and the rest are equal to 0, respectively standing for ``on'' and ``off''. 

We use the Gumbel-Max sampling method to generate the constellation symbol sequence in the forward propagation.
Specifically, we sample from the transition probability $p_{en}(\mathbf{z}|\mathbf{x},\theta)$ to get a constellation symbol sequence, so that its empirical distribution of matches its probability distribution. 
In an $M$-order modulation, the sampling process of $z_i$ from the transition probability $p_{en}(\mathbf{z}|\mathbf{x},\theta)$ can be expressed as:
    \begin{equation}
        z_i = y^{i}_\theta(\mathbf{x}, \boldsymbol{\tau})=\mathbf{c}^{T}\cdot \text{one-hot}\left(\underset{m\in\left\lbrace 1,...,M \right\rbrace}{\mathrm{argmax}}\left\lbrack \tau_{im} + \log q_{im|\mathbf{x},\theta}\right\rbrack\right),
            \label{ourSampling}
    \end{equation}
where $\mathbf{c}=\left(c_1,...,c_M\right)$ is the vector composed of all the constellation symbols, $\tau_{im}$ is a random variable with Gumbel distribution and $q_{im|\mathbf{x},\theta}$ is the probability as in \eqref{norm}. 
The length of the one-hot vector corresponds to the total number of the constellation symbols $M$.
When the $x$-th element is equal to 1, it means that the $x$-th constellation symbol has been sampled.
The dot product of $\mathbf{c}^{T}$ and the 
\text{one-hot} vector samples the constellation symbol for $z_i$.


In the Gumbel-Softmax reparameterization trick, to estimate the gradients with respect to $\theta$ in backpropagation, the non-differentiable $\mathrm{argmax}$ function is approximated by the differentiable $\mathrm{softmax}$ function.
Hence, the \text{one-hot} vector obtained by the Gumbel-Max sampling method is substituted by a vector $\mathbf{v}_i=(v_{i1},...,v_{iM})$ whose $m$th entry is:
\begin{equation}
    v_{im}=\frac{\exp((\log q_{im|\mathbf{x},\theta} + \tau_{im})/\rho)}{\sum_{k=1}^{M}\exp((\log q_{ik|\mathbf{x},\theta} + \tau_{ik})/\rho)},
\end{equation}
where $\rho$ represents the temperature hyperparameter to control the hardness of the approximation.
Therefore, different from the forward propagation, in backpropagation we have $\mathbf{z}=\Tilde{\mathbf{y}}_\theta(\mathbf{x},\boldsymbol{\tau})$, whose $i$th entry is determined by
\begin{equation}
    z_i=\Tilde{y}_{\theta}^{i}(\mathbf{x},\boldsymbol{\tau})=\mathbf{c}^{T}\cdot\mathbf{v}_i.
\end{equation}
In this way, the loss function is differentiable everywhere, and the gradients with respect to the parameters $\theta$ can be easily obtained through samples of $\mathbf{Z}$ in backpropagation. The gradient estimator of $\theta$ can thus be obtained as
\begin{align}
    &\nabla_\theta\mathbb{E}_{ p(\mathbf{x}, \mathbf{s})}\mathbb{E}_{ p_{en}(\mathbf{z}|\mathbf{x},\theta)}\left\lbrack h_{\phi, \psi}(\mathbf{z})\right\rbrack\nonumber\\
    &=\nabla_\theta\mathbb{E}_{ p(\mathbf{x}, \mathbf{s})}\mathbb{E}_{p(\boldsymbol{\tau})}\left\lbrack h_{\phi, \psi}(\Tilde{\mathbf{y}}_\theta(\mathbf{x},\boldsymbol{\tau}))\right\rbrack\nonumber\\
    &=\mathbb{E}_{ p(\mathbf{x}, \mathbf{s})}\mathbb{E}_{p(\boldsymbol{\tau})}\left\lbrack \nabla_\theta h_{\phi, \psi}(\Tilde{\mathbf{y}}_\theta(\mathbf{x},\boldsymbol{\tau}))\right\rbrack.
    \label{graEstimation}
\end{align}

\section{Experiment Results}

In this section, we provide comprehensive experiments to validate the advantage of the proposed JCM framework at various channel states, transmission rates and modulation orders.
The experiments are performed on an Intel Xeon Silver 4214R CPU, and a 24 GB Nvidia GeForce RTX 3090 Ti graphics card with Pytorch powered with CUDA 11.4.


\subsection{Experiment Settings}
\subsubsection{Datasets}
If not mentioned otherwise, the experiments are conducted on the CIFAR10 dataset\cite{krizhevsky2009cifar}, where the classification label is set as the semantic information. The CIFAR10 dataset consists of 60,000 $32\times 32$ color images in 10 classes, among which 50,000 images are used as training data and 10,000 images are used as test data.

\subsubsection{Neural Network Architecture And Hyper-parameters}

Table \ref{architecture} describes the detailed NN architecture of our method. The probabilistic encoder-modulator employs Resnet \cite{resnet} as the backbone, and its output is then sent into an MLP with an output dimension of $2\sqrt{M}\times n$, the number of probability categories required for an $n$-length constellation symbol sequence with $M$-order modulation. 
We adopt Spinal-net\cite{spinal} for the semantic information reconstruction.
For the source data reconstruction, we use Resnet combined with the depth-to-space operation to perform the upsampling.
We use a batchsize of 32 samples and employ the Adam optimizer for the training. 
We use a cosine annealing schedule with an initial learning rate as $5\times10^{-4}$, which then gradually decreases to $10^{-6}$ in the duration of 300 epochs according to $lr(t)=10^{-6}+\frac{1}{2}(5\times10^{-4}-10^{-6})(1+\cos{\frac{t}{300}\pi})$, where $t$ represents the current epoch number.
The temperature hyperparamter $\rho$ is set to be 1.5.
Moreover, we select the value of the tradeoff hyperparameter $\lambda$ in the loss function \eqref{empirical obj} to strike a balance where classification accuracy is maintained at a consistently high level while maximizing the image recovery performance.
The value of $\lambda$ varies with the system parameters such as the channel SNR, the modulation order and the transmission rate. 
The specific values of $\lambda$ used in our simulation are presented in Table \ref{values of lambda}.

\begin{table}[t]
    \centering
    \caption{Neural network architecture of the proposed method, where $n$ is the number of channel uses and $M$ is the order of modulation.}
    \label{architecture}
    \begin{tabular}{|c|c|c|}
    \hline
            & \textbf{Layer} & \textbf{\makecell[c]{Output\\ Dimension}} \\
         \hline
        \multirow{5}{*}{\textbf{\makecell[c]{Probabilistic\\ Encoder\\-Modulator}}}  & Conv + BatchNorm + ReLU & 64$\times$32$\times$32 \\
        \cline{2-3}
        \multirow{4}{*}{} & Resnet Block $\times$ 4 & \makecell[c]{4$\times$4$\times n$ (BPSK)\\ 4$\times$4$\times2n$ (M-QAM)} \\
        \cline{2-3}
        \multirow{4}{*}{} & Flattening + MLP & \makecell[c]{$2\times n$ (BPSK)\\$2\sqrt{M}\times n$ (M-QAM)} \\
        \cline{2-3}
        \hline
        \textbf{\makecell[c]{Semantic\\Information\\ Reconstruction}} & Spinalnet Block $\times$ 4 + MLP & 10 \\
        \hline
        \multirow{5}{*}{\textbf{\makecell[c]{Source\\Data\\ Reconstruction}}} & Conv + ReLU & 256$\times$4$\times$4 \\
        \cline{2-3}
        \multirow{5}{*}{} & Resnet Block $\times$ 2 & 256$\times$4$\times$4 \\
        \cline{2-3}
        \multirow{5}{*}{} & Reshape + Conv + ReLU & 128$\times$16$\times$16 \\
        \cline{2-3}
        \multirow{5}{*}{} & Resnet Block & 128$\times$16$\times$16 \\
        \cline{2-3}
        \multirow{5}{*}{} & Reshape + Conv + ReLU & 3$\times$32$\times$32 \\
        \hline
    \end{tabular}
\end{table}

\begin{table}[t]
  \begin{center}
    \caption{The specific values of $\lambda$ for varying channel SNRs and the modulation methods, with channel use $n=128$.}
    \begin{tabular}{|c|c|c|c|c|c|c|c|c|c|c|c}
    \hline
      \diagbox{Order}{$\lambda$}{SNR (dB)} & 18 & 12 & 6 & 0 & -6 & -12 & -18\\
      \hline
      BPSK & 70 & 70 & 70 & 30 & 20 & 2 & 0.5 \\
      \hline
      4, 16, 64QAM & 270 & 250 & 250 & 30 & 20 & 2 & 0.5 \\
      \hline
    \end{tabular}
    \label{values of lambda}
  \end{center} 
\end{table}

\subsubsection{Benchmarks}
We compare the performance of our proposed JCM framework with four benchmarks: the conventional NN-based analog modulation (abbreviated as ``Analog'')\cite{gunduz2019jscc1}, and three quantization-based digital semantic coding-modulation methods respectively via the end-to-end hard-to-soft quantizer\cite{unconstrained}, the non-learning-based uniform quantization (abbreviated as ``Uniform'')\cite{xie2020benchamrk1} and the learning-based quantization (abbreviated as ``NN'')\cite{yeli2022harq}. Two performance criteria are used, namely peak-signal-to-noise ratio (PSNR) for the image recovery task and accuracy for the image classification task. The details of the baseline methods are listed as follows.

\begin{figure*}[t]
    \centering
        \begin{subfigure}[b]{0.99\textwidth}
          \centering
          \includegraphics[width=0.95\textwidth]{./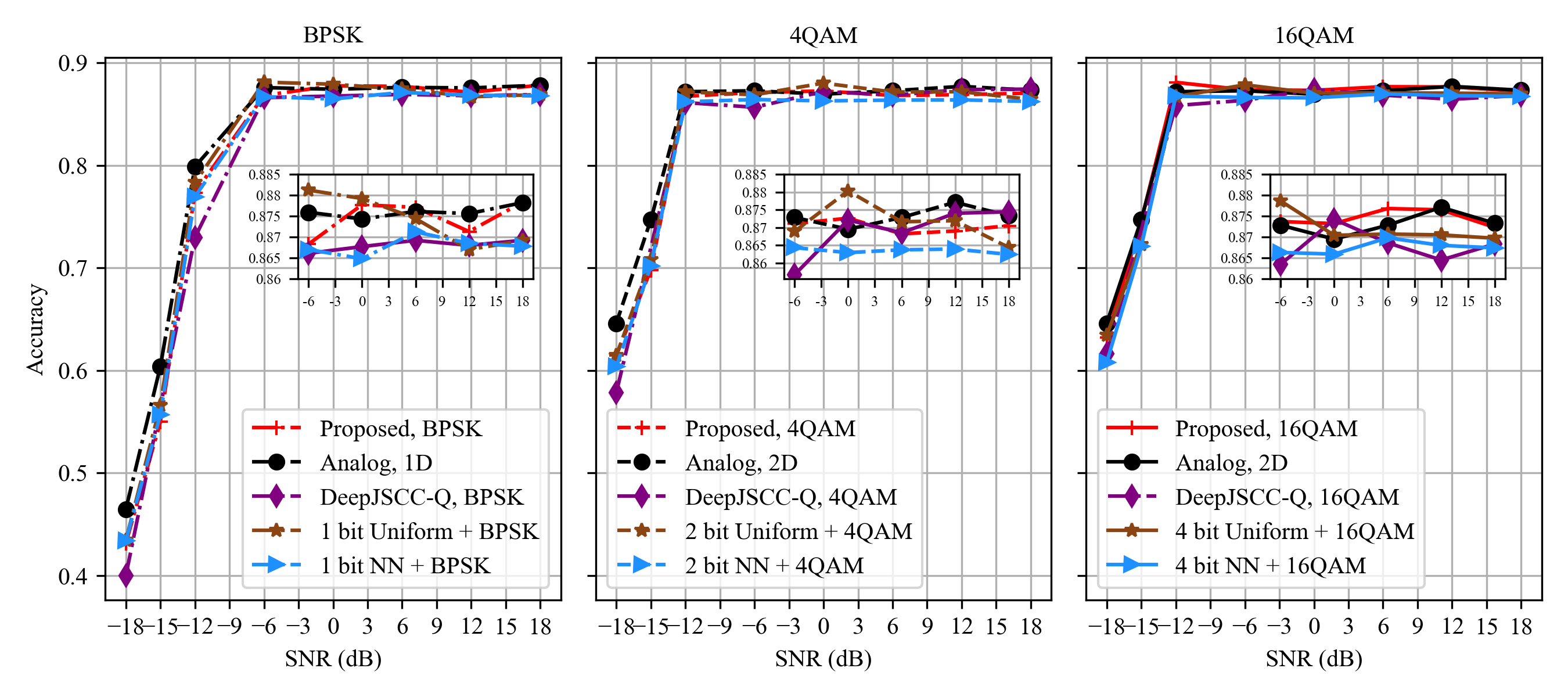}
          \caption{Classification accuracy \emph{vs.} SNR.}
          \label{snr_mod_acc}
        \end{subfigure}
        \begin{subfigure}[b]{0.99\textwidth}
          \centering
          \includegraphics[width=0.95\textwidth]{./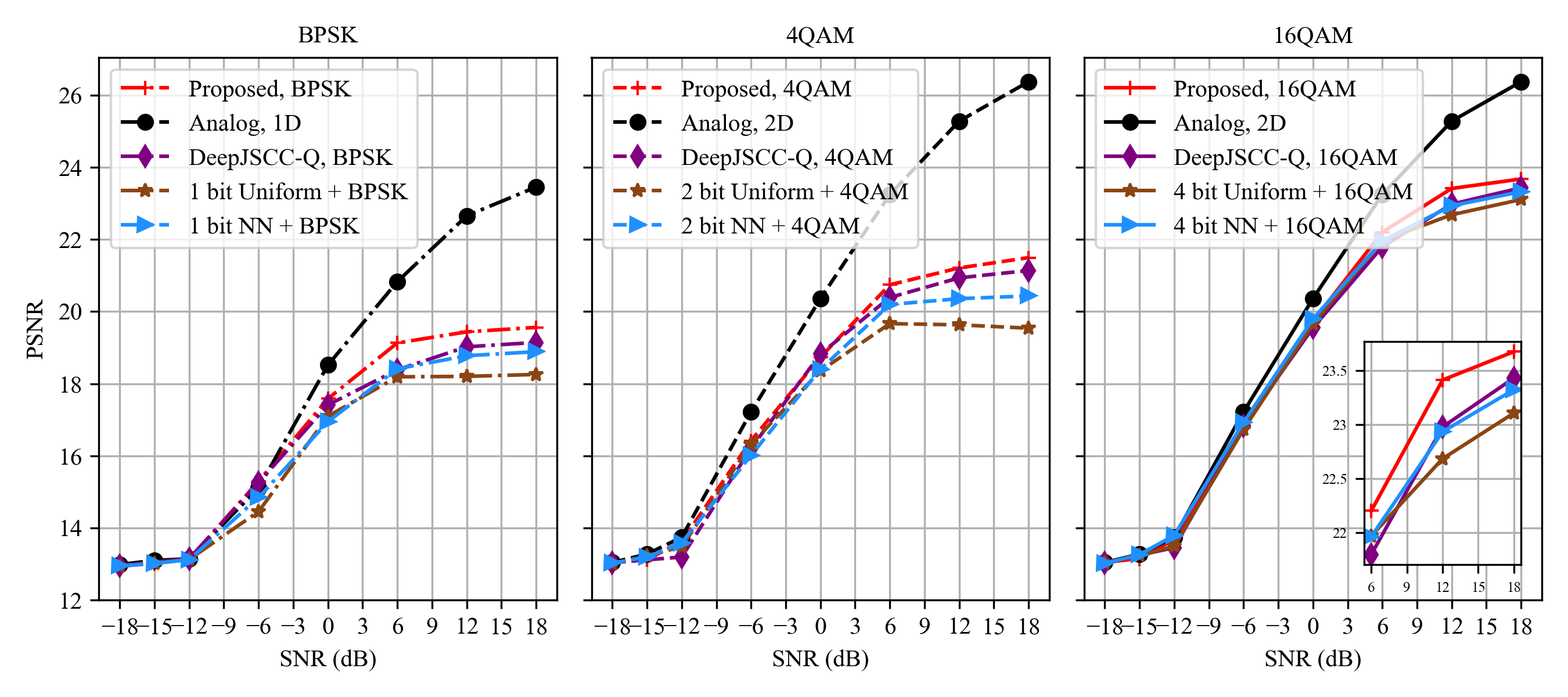}
          \caption{PSNR \emph{vs.} SNR.}
          \label{snr_mod_psnr}
        \end{subfigure}
    \caption{Performances of the classification accuracy and the image recovery with varying channel SNRs and three different modulation schemes: BPSK, 4QAM and 16QAM. The number of channel uses is set at 128.}
    \label{snr_mod}
\end{figure*}

\begin{figure}[t]
    \centering
      \includegraphics[width=0.49\textwidth]{./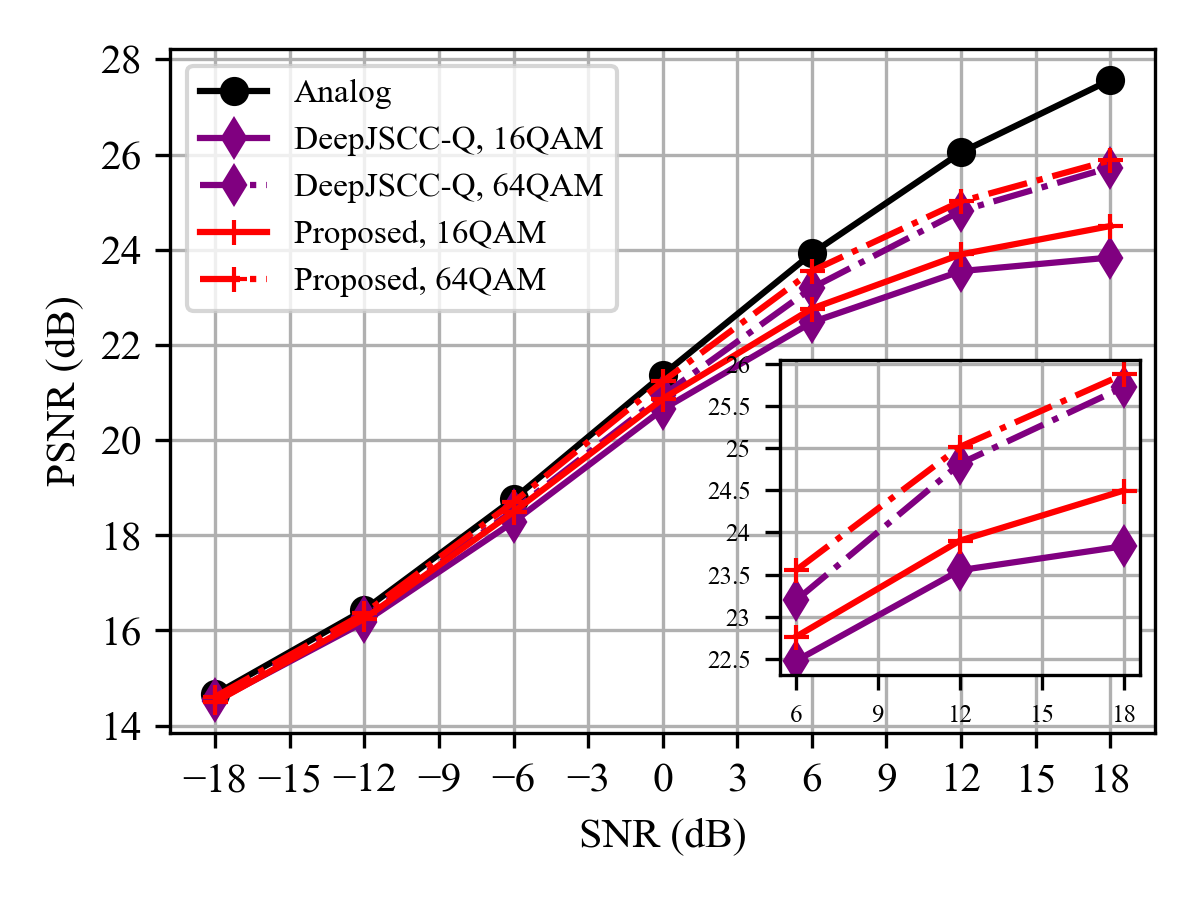}
    \caption{PSNR \textit{vs.} SNR on Tiny Imagenet with 1024 channel uses. The modulation schemes are set as 16QAM and 64QAM.}
    \label{tiny}
\end{figure}

\begin{figure*}[t]
    \centering
        \begin{subfigure}[b]{0.48\textwidth}
          \centering
          \includegraphics[width=0.98\textwidth]{./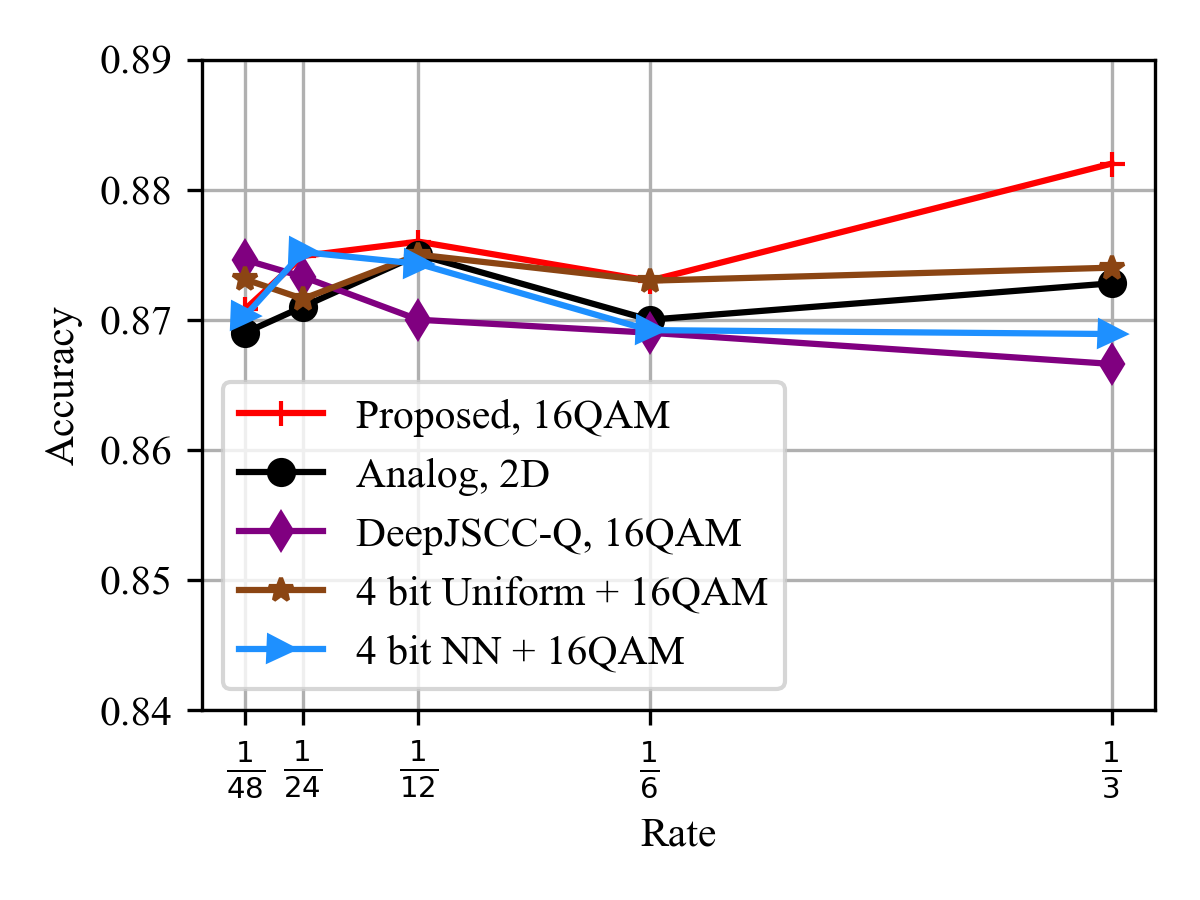}
          \caption{Classification accuracy \emph{vs.} rate.}
          \label{acc qam}
        \end{subfigure}
        \begin{subfigure}[b]{0.48\textwidth}
          \centering
          \includegraphics[width=0.98\textwidth]{./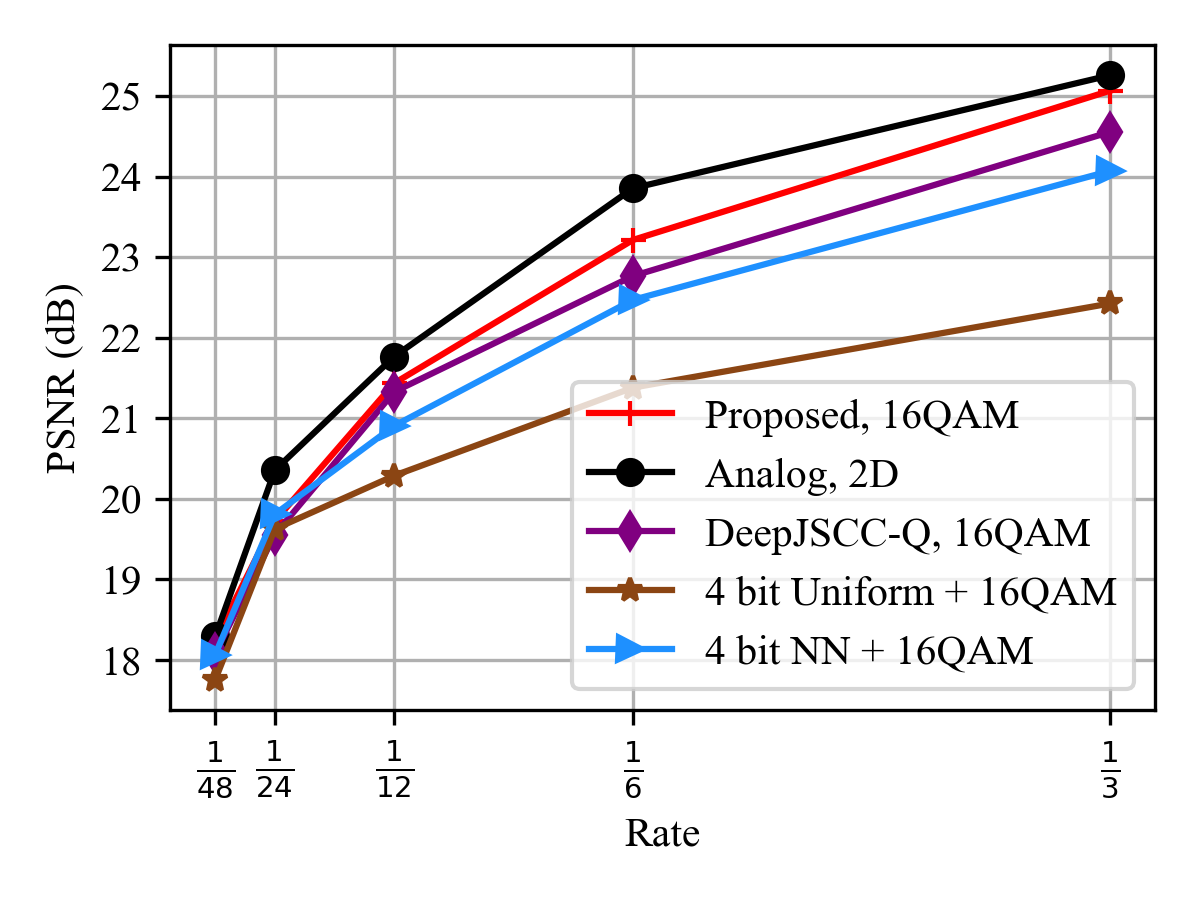}
          \caption{PSNR \emph{vs.} rate.}
          \label{psnr qam}
        \end{subfigure}
    \caption{Performance comparison with varying transmission rates in 16QAM. SNR is set at 0 dB.}
    \label{rate}
\end{figure*}

\begin{figure}[t]
\centering
\includegraphics[width=0.9\linewidth]{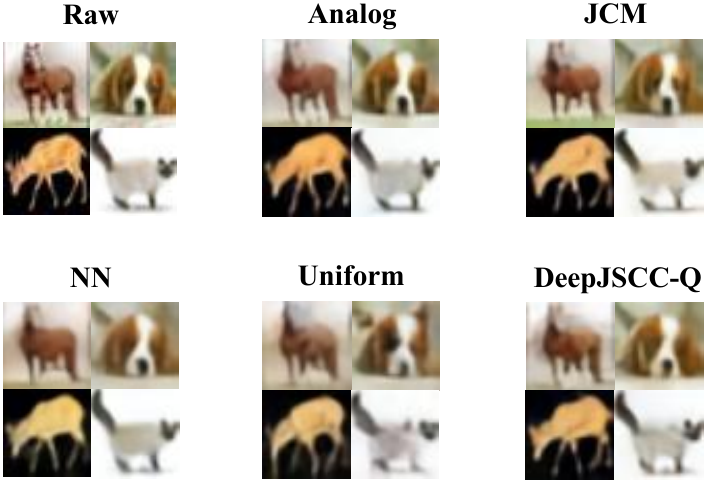}
\caption{Examples of recovered images at SNR $=$ 0 dB with 1024 channel uses. The modulation scheme is set as 16QAM.} 
\label{imageRecovery}
\end{figure}

\begin{itemize}
\item \emph{``Analog'' method}: 
In this method, the output of the NN-based semantic encoder is directly sent into the channel without digitalization. Therefore, this method is called the analog modulation method.
The output dimension of the analog method matches that of the digital methods. In BPSK modulations, the output of the analog method is a one dimensional (1D) sequence with the same channel use as other methods, while in QAM modulations, the output of the analog method is two dimensional (2D).
In terms of the NN architecture, the NN-based semantic encoder also adopts Resnet as the backbone. The two decoders use the same architecture as in the proposed method as well.
As a remark, with unconstrained channel input, this method serves as the performance upperbound for all digital methods\cite{unconstrained}.
\item \emph{DeepJSCC-Q method}: This method jointly trains coding and modulation through the soft-to-hard quantizer designed in \cite{unconstrained}.
In this method, a hard-decision quantization is applied in the forward pass to map the real-valued output of the semantic encoder to the nearest symbol in the constellation.
In the backward pass, this non-differentiable operation is approximated by a Softmax weighted sum.
The whole system is trained end-to-end using the loss function proposed in \cite{unconstrained} while adding an additional term of cross entropy for classification.
\item \emph{``Uniform'' method}: 
In this method, the output of the NN-based semantic encoder is first uniformly quantized into a discrete number sequence, then modulated and sent into channel. Consequently, this method is termed as ``Uniform'' method. 
The NN-based semantic encoder and decoders are the same as those in the analog method.
Additionally, we match the quantization order with the modulation order. For example, we use 1 bit quantization in BPSK, 2 bit quantization in 4QAM, and so on.
\item \emph{``NN'' method}: In this method, the system model is the same as the uniform method, except that the uniform quantizer and dequantizer are replaced by their NN counterparts. Therefore, we call this method ``NN'' method. The NN quantizer and dequantizer utilize a one-layer MLP each, and they are trained to minimize the quantization error, using MSE as the loss function. 
Also, same as the uniform method, the quantization order is matched with the modulation order.
\end{itemize}


\subsection{Performances against Varying SNRs}


In this subsection, the performances of our method and the benchmark methods are evaluated at different SNRs and different modulation schemes. 
If not specified otherwise, the SNR ranges from -18 dB to 18 dB with $n=$ 128 channel uses and the modulation schemes are BPSK, rectangular 4QAM and rectangular 16QAM. 

Fig.~\ref{snr_mod}(a) and Fig.~\ref{snr_mod}(b) respectively plot the classification accuracy and image recovery performances versus SNR in different modulation schemes.
We can observe that all methods achieve nearly identical classification accuracy performances across various SNRs and modulation schemes.
This is due to the relatively simple nature of 10-class classification, which makes the recovery of the semantic information easier compared to that of the source data.
As for PSNR performances, first, we can see that in the low SNR region of SNR $\le-6$ dB, the proposed method has a close performance with the analog method for all three modulation schemes.
The performance gap widens as SNR increases.
Nevertheless, it can be observed that the performance gap between the proposed method and the upperbound analog method decreases as the modulation order increases.
For example, at SNR $=$ 12 dB, the analog method outperforms JCM with 4QAM by 3.2 dB, and outperforms JCM with 16QAM only by 1.9 dB. 
We will study the relationship between the modulation order and the performances more closely in Section IV-D.

Moreover, the proposed method surpasses all the quantization-based digital benchmark methods in all modulation schemes especially in high SNR region.
For BPSK modulation with SNR = 18 dB, the JCM method outperforms the DeepJSCC-Q method, the NN method, and the Uniform method by 0.4 dB, 0.7 dB, and 1.3 dB respectively at PSNR performance.
For 16QAM at SNR = 18 dB, the JCM can provide up to 0.3 dB, 0.4 dB, and 0.6 dB gain, respectively over the three digital benchmarks. 
Note that the advantage of the proposed JCM over DeepJSCC-Q, which also jointly optimizes coding and modulation, stems from our probability-based soft-decision modulation approach.
The JCM method directly learns the probability distribution from the source to the constellation symbols, then samples the constellation sequence for transmission, thereby avoiding the information loss due to hard decisions. 
This advantage can also be understood as the advantage of Variational Autoencoders (VAE) over Autoencders (AE).
By explicitly modeling probabilistic distributions and introducing a level of uncertainty into NNs, VAEs prove more advantageous for tasks involving uncertainty, such as handling channel noise.


For a more comprehensive comparison of our JCM method and the DeepJSCC-Q method, we further evaluate their performance on Tiny Imagenet\cite{le2015tiny}, a larger-scale image dataset consisting of 100,000 64$\times$64 colored images. 
On a side note, on Tiny Imagenet, we exclusively focus on the PSNR performance comparison shown in Fig.~\ref{tiny}, since as shown on the CIFAR10 dataset, all methods exhibit comparable classification accuracy, and their performance disparities are mainly reflected through PSNR.
From Fig.~\ref{tiny}, we can see that the proposed method outperforms the DeepJSCC-Q method with both 16QAM and 64QAM.
With 16QAM, for example, the proposed method outperforms the DeepJSCC-Q method by 0.66 dB at SNR = 18 dB.
With 64QAM, the proposed method outperforms the DeepJSCC-Q method by 0.15 dB at SNR = 18 dB.

\begin{figure*}[t]
    \centering
        \begin{subfigure}[b]{0.48\textwidth}
          \centering
          \includegraphics[width=1\textwidth]{./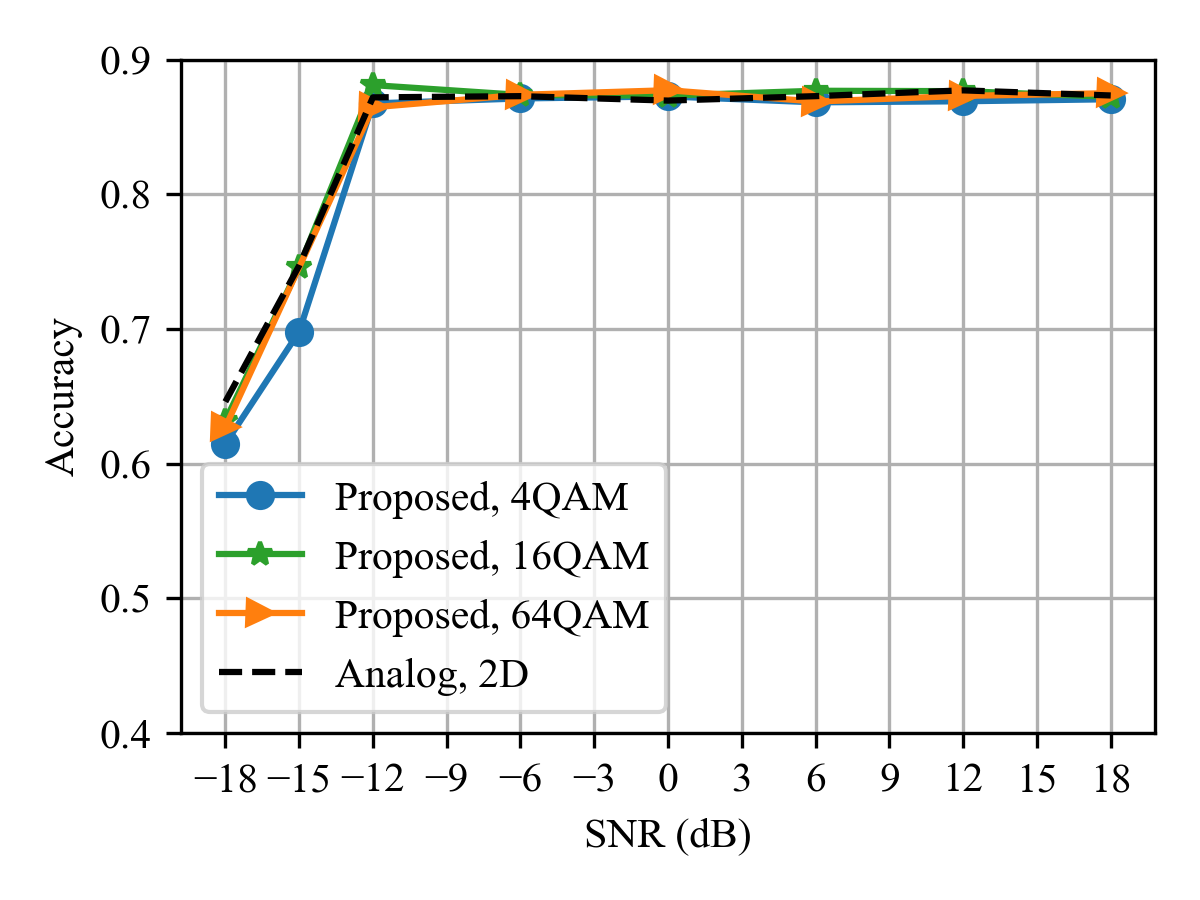}
          \caption{Classification accuracy \textit{vs.} SNR.}
        \end{subfigure}
        \begin{subfigure}[b]{0.48\textwidth}
          \centering
          \includegraphics[width=1\textwidth]{./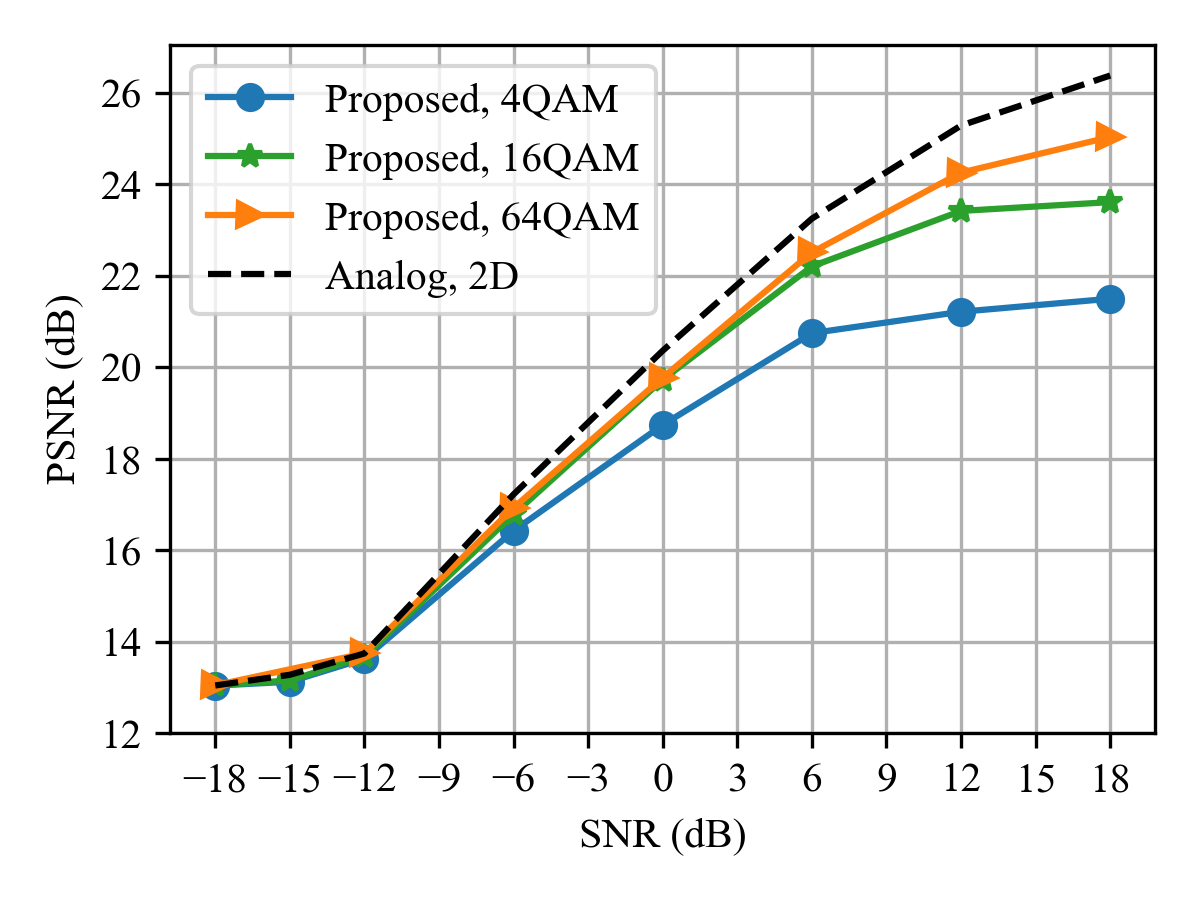}
          \caption{PSNR \textit{vs.} SNR.}
        \end{subfigure}
    \caption{Performance comparison of 4QAM, 16QAM and 64QAM of the proposed method.}
    \label{64qam}
\end{figure*}

In conclusion, the proposed method has a performance advantage over other digital benchmark methods, especially in high SNR region, and is upperbounded by the analog method.
The performance gap between the proposed method and the analog method decreases with increasing modulation order.

\subsection{Performances against Varying Transmission Rates}


In this subsection, we study the relationship between the transmission rate and the performances, where the transmission rate $r$ is defined as the ratio between the number of channel uses $n$ and the dimension of each image data $k$.
For CIFAR 10 dataset, the transmission rate is computed as
\begin{equation}
    r = \frac{n}{32\times32 \times3}.
\end{equation}
We range the rate $r$ from $\frac{1}{48}$ to $\frac{1}{3}$, and fix channel SNR at 0 dB.
As a remark, the fixed 128 channel use in the previous section is equivalent to a transmission rate of $\frac{1}{24}$.

In Fig.~\ref{rate} we plot the performance of classification and image recovery at different transmission rates with 16QAM.
First, it can be observed that the classification accuracy remains steady at around 87$\%$ across all transmission rates for all methods, while the PSNR can be increased by up to 6 dB when $r$ increases from $\frac{1}{48}$ to $\frac{1}{3}$.
This shows that the recovery of semantic information requires much less channel resources than the recovery of source data.
Second, for image recovery, the proposed method always maintains a performance advantage among all the digital benchmark methods and has a close performance with the analog method.
For example, when $r=\frac{1}{6}$, the JCM method outperforms the DeepJSCC-Q method by 0.4 dB, the NN method by 0.6 dB and outperforms the Uniform method by 1.7 dB.
At the same time, the analog method has an performance advantage of 0.7 dB over the proposed method.

Fig.~\ref{imageRecovery} illustrates visual examples of the image recovery for all methods. 
We set the SNR at 0 dB and use 16QAM modulation.
We can observe that compared to other digital methods, the recovered images of the proposed method are clearer with more discernible features and outlines, which shows its advantage in recovering the source data.

\subsection{Performances against Different Modulation Orders}

\begin{figure*}[t]
    \centering
        \begin{subfigure}[b]{0.32\textwidth}
          \centering
          \includegraphics[width=0.8\textwidth]{./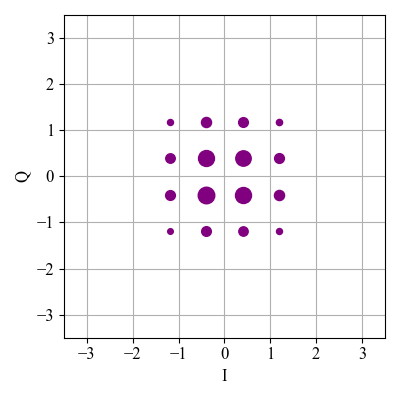}
          \caption{SNR = -12 dB.}
        \end{subfigure}
        \begin{subfigure}[b]{0.32\textwidth}
          \centering
          \includegraphics[width=0.8\textwidth]{./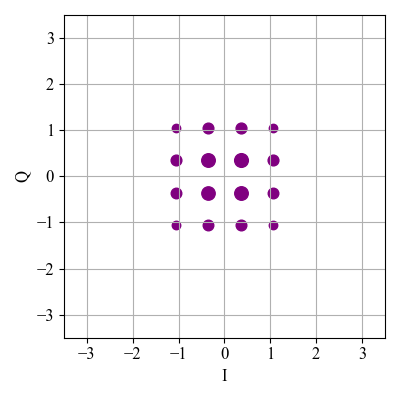}
          \caption{SNR = 0 dB.}
        \end{subfigure}
            \begin{subfigure}[b]{0.32\textwidth}
          \centering
          \includegraphics[width=0.8\textwidth]{./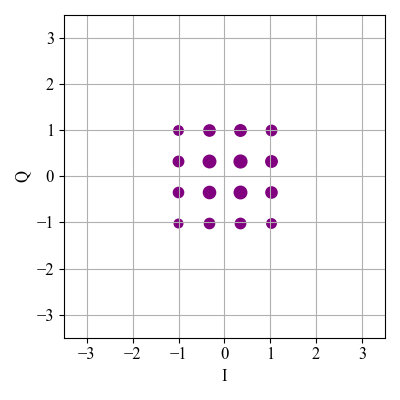}
          \caption{SNR = 18 dB.}
        \end{subfigure}
    \caption{The empirical distributions of the constellation points output by the JCM method with 16QAM at different channel SNRs. The sizes of the points are proportional to their probabilities of occurrence.}
    \label{distribution 16qam}
\end{figure*}

\begin{figure*}[t]
    \centering
        \begin{subfigure}[b]{0.32\textwidth}
          \centering
          \includegraphics[width=0.8\textwidth]{./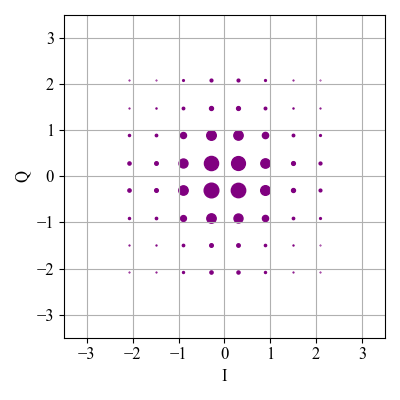}
          \caption{SNR = -12 dB.}
        \end{subfigure}
        \begin{subfigure}[b]{0.32\textwidth}
          \centering
          \includegraphics[width=0.8\textwidth]{./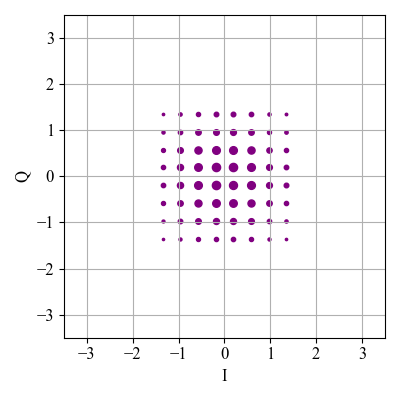}
          \caption{SNR = 0 dB.}
        \end{subfigure}
            \begin{subfigure}[b]{0.32\textwidth}
          \centering
          \includegraphics[width=0.8\textwidth]{./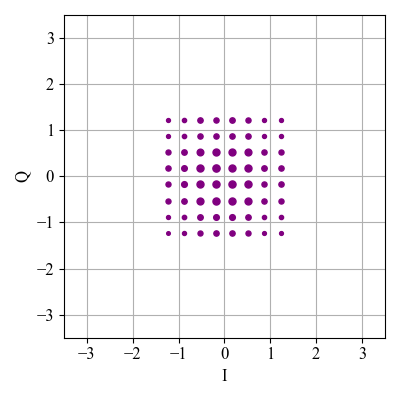}
          \caption{SNR = 18 dB.}
        \end{subfigure}
    \caption{The empirical distributions of the constellation points output by the JCM method with 64QAM at different channel SNRs. The sizes of the points are proportional to their probabilities of occurrence.}
    \label{distribution 64qam}
\end{figure*}

In previous subsections, we find that the performance of the proposed method improves with increasing modulation order. Therefore, in this subsection, we study more closely the effects of the modulation order on the proposed method.

In Fig.~\ref{64qam}, we plot the accuracy and PSNR performance of the proposed method versus the channel SNR in three rectangular M-QAM modulation: 4QAM, 16QAM and 64QAM. 
The transmission rate is set at $\frac{1}{24}$, which is equivalent to 128 channel uses.
First, in low SNR region, the three modulations 4QAM, 16QAM and 64QAM perform almost the same. 
In particular, when SNR $\le$ -12 dB, their curves of the classification accuracy and PSNR coincides.
As the channel SNR increases, higher order modulation begins to show advantages over lower order modulation.
At SNR $=$ 6 dB, with the same accuracy, 64QAM outperforms 16QAM by 0.3 dB in image recovery, and outperforms 4QAM by 1.8 dB.
These gaps respectively widens to 1.4 dB and 3.5 dB at SNR $=$ 18 dB.
Additionally, the performance gap between the analog method and the JCM method reduces with increasing modulation order.
For example, at SNR $=$ 18 dB, with the same accuracy, the gap in PSNR performance between 4QAM and the analog method is 4.8 dB.
This gap decreases to 2.8 dB and 1.4 dB respectively when the modulation changes to 16QAM and 64QAM.
As such, the JCM method provides a convenient way to conduct digital modulation in a semantic communication system, namely, by using the highest modulation order that the transmission device could support for the best performance it can offer.

To provide more insights into the JCM framework, in Fig.~\ref{distribution 16qam}, we plot the empirical distributions of the constellation points output by the JCM method with 16QAM at SNR $=$ -12 dB, 0 dB and 18 dB with the average power constrain $P=1$.
We can observe that the probability distribution has the appearance of a two-dimensional Gaussian distribution at low SNR and gradually changes towards a uniform distribution as the channel SNRs increase.
The same can be said of the empirical distribution of the 64QAM constellation points plotted in Fig.~\ref{distribution 64qam}.
This phenomenon coincides with the traditional work on probabilistic shaping in coded modulation\cite{ITonPS, 2019PS}, as well as the more recent work on the optimization of probabilistic shaping for physical layer using NNs\cite{NNforPS, NNforPS2}.
Probabilistic shaping approaches the Shannon limit by trading off bit rate with average power, resulting in larger Euclidean distance among constellation points at the same power level\cite{ITonPS, 2019PS}.
It is a well-known fact that the optimal shaping distribution for AWGN channels comes from the Maxwell-Boltzmann family\cite{ITonPS}.
Notably, JCM can approximately achieve this probabilistic shaping without the need for explicit instructions to the NNs regarding the probability distribution of the output signal, which underscores its impressive ability to match with the channel conditions.
\begin{figure*}[t]
    \centering
        \begin{subfigure}[b]{0.48\textwidth}
          \centering
          \includegraphics[width=1\textwidth]{./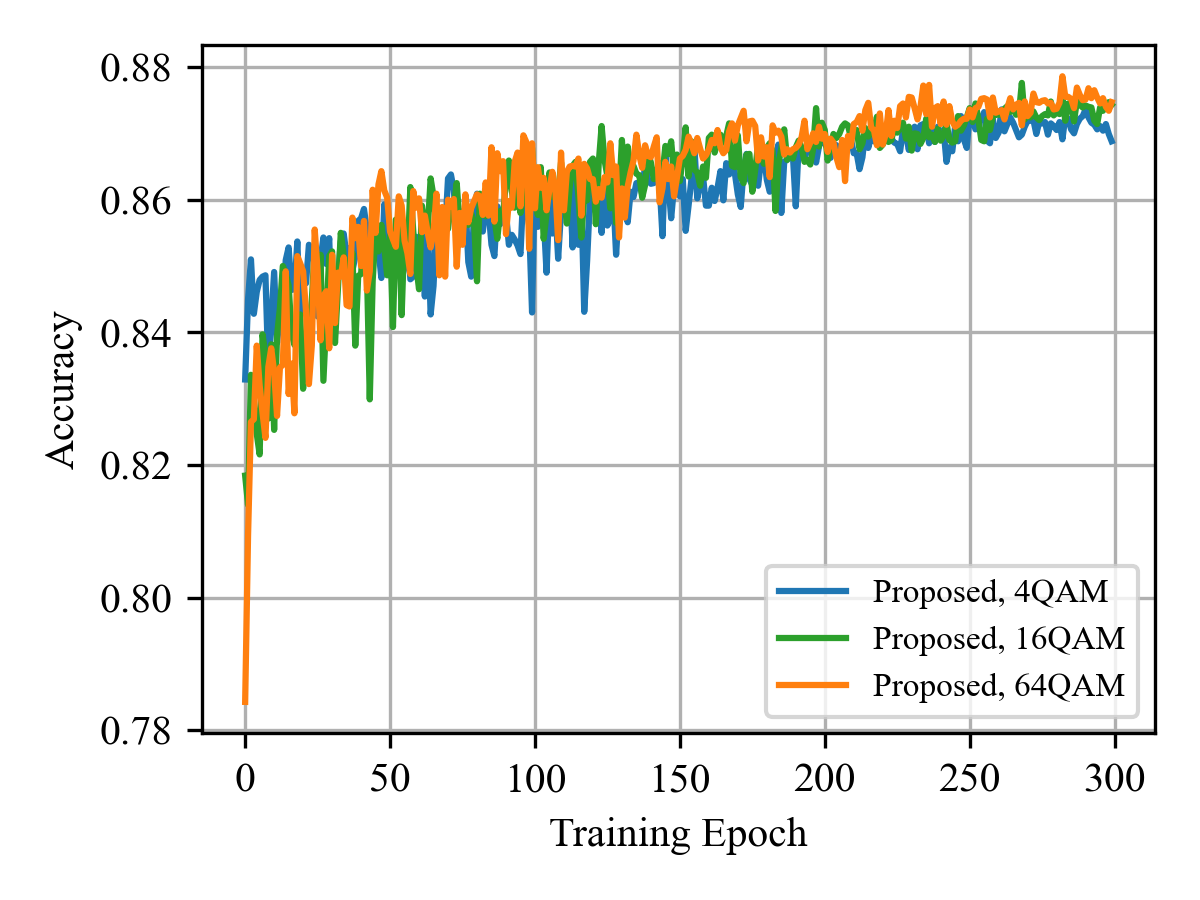}
          \caption{Convergence rate of the classification accuracy.}
          \label{acc rate}
        \end{subfigure}
        \begin{subfigure}[b]{0.48\textwidth}
          \centering
          \includegraphics[width=1\textwidth]{./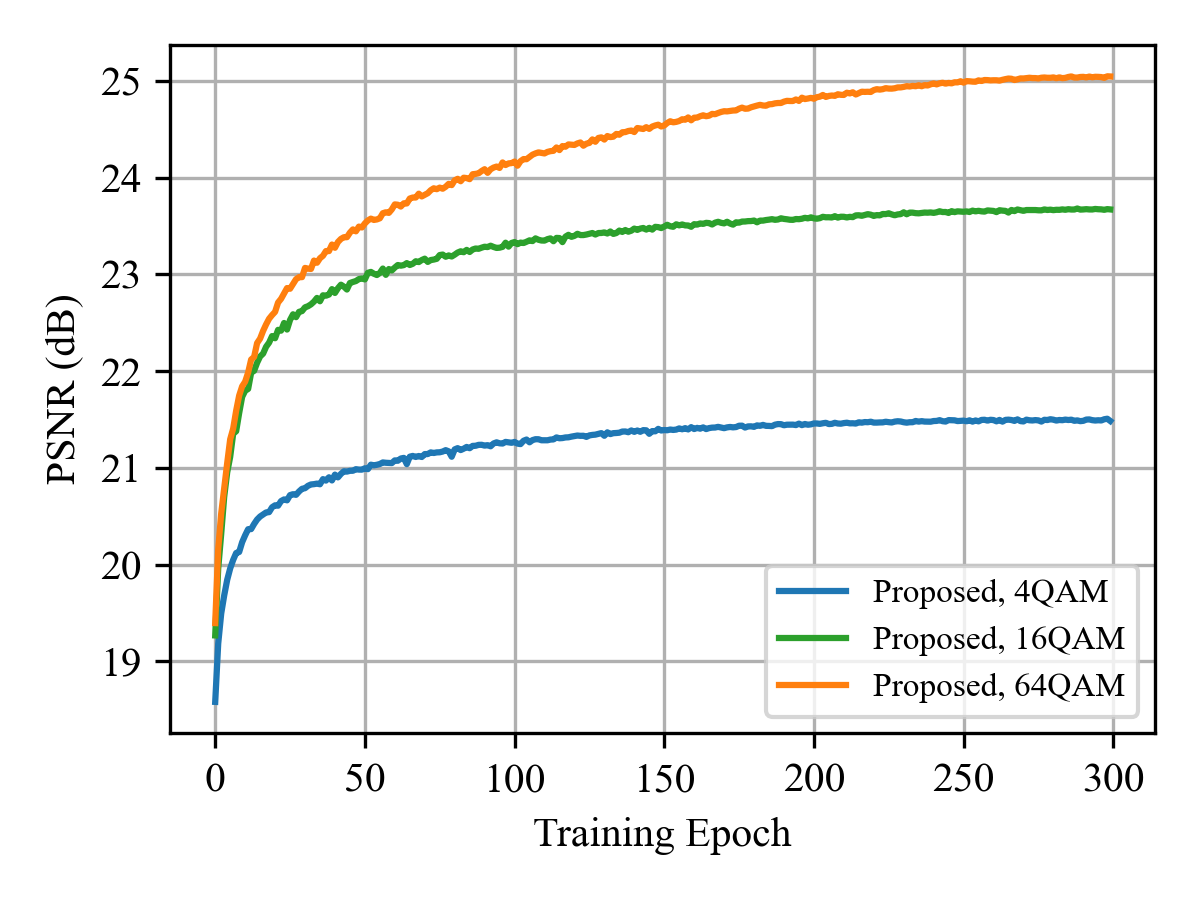}
          \caption{Convergence rate of PSNR.}
          \label{psnr rate}
        \end{subfigure}
    \caption{Convergence rates of the three modulation schemes. Channel use is set at 128. SNR is set at 18 dB.}
    \label{training rate}
\end{figure*}

To further find out whether higher order modulation will slow down the training, we plot in Fig.~\ref{training rate} the convergence rates when training the JCM framework with 4QAM, 16QAM and 64QAM, and also list in Table \ref{paras} their specific numbers of floating point operations (FLOPs), the number of parameters as well as the per-epoch training time.
It can be observed that the convergence rate of 64QAM is almost the same as that of lower order modulation, though bigger NNs with more parameters are needed.
These results validate the training efficiency and scalability of the JCM framework.

\begin{table}[t]
  \begin{center}
    \caption{The number of FLOPs, the number of parameters and training time of the JCM method for different modulation methods.}
    \begin{tabular}{|c|c|c|c|}
    \hline
    Modulation & 4QAM & 16QAM & 64QAM\\
    \hline
      FLOPs (G) & 0.643 & 0.651 & 0.668 \\
      \hline
      Params (M) &  24.6 & 33.0 & 49.8\\
      \hline
      Training Time for 1 Epoch (s) & 83 & 96 & 110 \\
      \hline
    \end{tabular}
    \label{paras}
  \end{center}
\end{table}

\section{Conclusion}
In this paper, we propose a joint coding-modulation framework for digital semantic communications based on the VAE architecture. In the JCM framework, the coding and modulation are jointly designed and learned, and thus are able to match with different channel conditions. 
Following this design, we successfully solve the non-differentiability problem of the digital modulation process using probabilistic models under the VAE architecture. 
Furthermore, aiming at the JCM framework, we employ variational inference and derive a loss function that can optimize the performance of both source reconstruction and semantic task execution.

Extensive experimental results validate the performance advantage of the JCM framework. 
It is found that the JCM framework outperforms separate design of semantic coding and modulation under various channel conditions, transmission rates, and modulation orders.
Furthermore, its performance gap to analog semantic communication reduces as the modulation order increases while enjoying the hardware implementation convenience.
Additionally, the JCM framework in higher order modulation achieves a better performance than in lower order modulation. Therefore, the JCM framework provides a convenient way of conducting digital modulation, namely by using the highest modulation order that the transmission device can support.

As a final note, this work has investigated how to perform digital modulation using existing modulation methods, such as BPSK, 4QAM and 16QAM, as well as their probabilistic shaping. 
Looking ahead, our future efforts will focus on the design and optimization of constellation geometry shaping to enhance the system's performance.

\begin{appendices}

\section{Proof of Theorem \ref{th1}}
To prove Theorem \ref{th1}, we first consider the decoding block, then merge the encoder-modulator into MI-OBJ. Specifically, for fixed transition probability, we want to find the best decoders in the sense that the NN parameterized probabilities $p_{de,s}(\mathbf{ s}|\mathbf{\hat z}, \phi)$ and $p_{de,o}(\mathbf{ x}|\mathbf{\hat z}, \psi)$ are closest to the true posterior distributions. Therefore, we use variational inference to derive a lower bound of MI-OBJ so that in the process of maximizing this lower bound, the NN parameterized probabilities can approach the true posterior distributions.
This lower bound is presented in Lemma \ref{lemmaVILB}.


\begin{lemma} A simpler lower bound of MI-OBJ is given by \eqref{lemma}
\begin{align}
I_{\theta}(\mathbf{S};\mathbf{\hat Z}) &+ \lambda\cdot I_{\theta}(\mathbf{\mathbf{X};\hat Z}) \ge
\mathbb{E}_{p(\mathbf{\hat z})}\lbrace \mathbb{E}_{p(\mathbf{s}|\mathbf{\hat z})} \log p_{de,s}(\mathbf{ s}|\mathbf{\hat z}, \phi)  
\nonumber\\&\ \ \ \ \ \ \ +\lambda\cdot \mathbb{E}_{p(\mathbf{x}|\mathbf{\hat z})} \log p_{de,o}(\mathbf{ x}|\mathbf{\hat z}, \psi)\rbrace + K
\label{lemma}
\end{align}
where $K=H(\mathbf{S})+\lambda\cdot H(\mathbf{X})$. 
\label{lemmaVILB}
\end{lemma}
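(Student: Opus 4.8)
The plan is to bound the two mutual-information terms separately and apply the standard variational lower bound on mutual information (the Barber--Agakov bound) to each. For the semantic term I would start from the decomposition $I_\theta(\mathbf{S};\mathbf{\hat Z}) = H(\mathbf{S}) - H(\mathbf{S}|\mathbf{\hat Z})$, and rewrite the conditional entropy as a nested expectation, $H(\mathbf{S}|\mathbf{\hat Z}) = -\mathbb{E}_{p(\mathbf{\hat z})}\mathbb{E}_{p(\mathbf{s}|\mathbf{\hat z})}\log p(\mathbf{s}|\mathbf{\hat z})$. The objective of this step is to expose the intractable true posterior $p(\mathbf{s}|\mathbf{\hat z})$ inside a logarithm so that it can be swapped for the tractable decoder $p_{de,s}(\mathbf{s}|\mathbf{\hat z},\phi)$.

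The replacement is justified by the non-negativity of the Kullback--Leibler divergence. For each fixed $\mathbf{\hat z}$,
\[
\mathrm{KL}[p(\mathbf{s}|\mathbf{\hat z}) || p_{de,s}(\mathbf{s}|\mathbf{\hat z},\phi)] = \mathbb{E}_{p(\mathbf{s}|\mathbf{\hat z})}[\log p(\mathbf{s}|\mathbf{\hat z}) - \log p_{de,s}(\mathbf{s}|\mathbf{\hat z},\phi)] \ge 0,
\]
which gives $\mathbb{E}_{p(\mathbf{s}|\mathbf{\hat z})}\log p(\mathbf{s}|\mathbf{\hat z}) \ge \mathbb{E}_{p(\mathbf{s}|\mathbf{\hat z})}\log p_{de,s}(\mathbf{s}|\mathbf{\hat z},\phi)$. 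Taking the outer expectation over $p(\mathbf{\hat z})$ and substituting back yields $I_\theta(\mathbf{S};\mathbf{\hat Z}) \ge H(\mathbf{S}) + \mathbb{E}_{p(\mathbf{\hat z})}\mathbb{E}_{p(\mathbf{s}|\mathbf{\hat z})}\log p_{de,s}(\mathbf{s}|\mathbf{\hat z},\phi)$, which is precisely expression \eqref{elbo}'s semantic component. I would then repeat this argument verbatim for the source term to obtain $I_\theta(\mathbf{X};\mathbf{\hat Z}) \ge H(\mathbf{X}) + \mathbb{E}_{p(\mathbf{\hat z})}\mathbb{E}_{p(\mathbf{x}|\mathbf{\hat z})}\log p_{de,o}(\mathbf{x}|\mathbf{\hat z},\psi)$, scale it by $\lambda$, and add the two bounds. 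The two source entropies collect into the single constant $K = H(\mathbf{S}) + \lambda H(\mathbf{X})$, giving exactly \eqref{lemma}.

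I do not expect a substantial obstacle here: the whole argument reduces to the non-negativity of two KL divergences. The only points demanding care are bookkeeping ones, namely preserving the correct order of the nested expectations $\mathbb{E}_{p(\mathbf{\hat z})}\mathbb{E}_{p(\cdot|\mathbf{\hat z})}$ when passing the inequality under the outer average, and confirming that $H(\mathbf{S})$ and $H(\mathbf{X})$ are genuinely independent of $\theta$ (they are marginal entropies fixed by the data distribution, so they may legitimately be pulled out as the constant $K$). A useful by-product of this route is that the gap in \eqref{lemma} equals $\mathbb{E}_{p(\mathbf{\hat z})}\mathrm{KL}[p(\mathbf{s}|\mathbf{\hat z})||p_{de,s}(\mathbf{s}|\mathbf{\hat z},\phi)] + \lambda\,\mathbb{E}_{p(\mathbf{\hat z})}\mathrm{KL}[p(\mathbf{x}|\mathbf{\hat z})||p_{de,o}(\mathbf{x}|\mathbf{\hat z},\psi)]$; hence equality holds exactly when both decoders coincide with their true posteriors, which supplies the tightness property needed later when Lemma \ref{lemmaVILB} is promoted to the full VILB of Theorem \ref{th1} by expanding $p(\mathbf{\hat z})$ along the Markov chain $(\mathbf{S},\mathbf{X}) \to \mathbf{Z} \to \mathbf{\hat Z}$.
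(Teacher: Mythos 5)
Your proposal is correct and follows essentially the same route as the paper's proof of Lemma \ref{lemmaVILB}: decompose each mutual information term as $H(\cdot)-H(\cdot|\mathbf{\hat Z})$, expose the true posterior inside the logarithm, and replace it with the variational decoder via the non-negativity of the KL divergence, then sum the two bounds with weight $\lambda$ to collect $K$. Your observation that the gap equals the sum of the two expected KL divergences is exactly the tightness remark the paper makes when discussing when equality in \eqref{elbo} holds.
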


\begin{proof}
Considering the term $I_{\theta}(\mathbf{S};\mathbf{\hat Z})$, we first expand it then add the NN parameterized approximate probability $p_{de,s}(\mathbf{ s}|\mathbf{\hat z}, \phi)$ into the equation. We split the term and get a KL divergence term between the true posterior distribution $p(\mathbf{s}|\mathbf{\hat{z}})$ and the approximate posterior distribution $p_{de,s}(\mathbf{ s}|\mathbf{\hat z}, \phi)$. Since the KL divergence is always non-negative, we can obtain a lower bound of $I_{\theta}(\mathbf{S};\mathbf{\hat Z})$. Specifically, we have
\begin{align}
    I_{\theta}(\mathbf{S};\mathbf{\hat Z})
    &=-H(\mathbf{S}|\mathbf{\hat Z}) + H(\mathbf{S}) \nonumber\\
    &=\mathbb{E}_{p(\mathbf{s},\mathbf{\hat z})} \log p(\mathbf{s}|\mathbf{\hat z}) + H(\mathbf{S}) \nonumber\\
    &=\mathbb{E}_{p(\mathbf{s},\mathbf{\hat z})} \log \left \lbrack p(\mathbf{s}|\mathbf{\hat z}) \frac{p_{de,s}(\mathbf{ s}|\mathbf{\hat z}, \phi)}{p_{de,s}(\mathbf{ s}|\mathbf{\hat z}, \phi)}\right \rbrack + H(\mathbf{S}) \nonumber\\
    &\overset{(a)}{=}\mathbb{E}_{p(\mathbf{\hat z})}\mathrm{KL}\left \lbrack p(\mathbf{s}|\mathbf{\hat z})||p_{de,s}(\mathbf{ s}|\mathbf{\hat z}, \phi)\right \rbrack \nonumber \\
    &\ \ \ \ +\mathbb{E}_{p(\mathbf{s},\mathbf{\hat z})} \log p_{de,s}(\mathbf{ s}|\mathbf{\hat z}, \phi) + H(\mathbf{S}) \nonumber \\
    &\overset{(b)}{\ge} \mathbb{E}_{p(\mathbf{s},\mathbf{\hat z})} \log p_{de,s}(\mathbf{ s}|\mathbf{\hat z}, \phi) + H(\mathbf{S})\nonumber \\
    &=\mathbb{E}_{p(\mathbf{\hat z})}\mathbb{E}_{p(\mathbf{s}|\mathbf{\hat z})} \log p_{de,s}(\mathbf{ s}|\mathbf{\hat z}, \phi) + H(\mathbf{S}),
    \label{sz}
\end{align}
where $(a)$ follows the definition of the KL divergence, and $p_{de,s}(\mathbf{ s}|\mathbf{\hat z}, \phi)$ is expanded as a variational approximation to the true posterior distribution $p(\mathbf{s}|\mathbf{\hat z})$; $(b)$ follows the non-negativity of the KL divergence.

Similarly, for $I_{\theta}(\mathbf{X};\mathbf{\hat Z})$ we have
\begin{align}
       I_{\theta}(\mathbf{X};\mathbf{\hat Z})
    &=-H(\mathbf{X}|\mathbf{\hat Z}) + H(\mathbf{X}) \nonumber\\
    &=\mathbb{E}_{p(\mathbf{x},\mathbf{\hat z})} \log p(\mathbf{x}|\mathbf{\hat z}) + H(\mathbf{X}) \nonumber\\
    &=\mathbb{E}_{p(\mathbf{x},\mathbf{\hat z})} \log \left \lbrack p(\mathbf{x}|\mathbf{\hat z}) \frac{p_{de,o}(\mathbf{ x}|\mathbf{\hat z}, \psi)}{p_{de,o}(\mathbf{ x}|\mathbf{\hat z}, \psi)}\right \rbrack + H(\mathbf{X}) \nonumber\\
    &=\mathbb{E}_{p(\mathbf{\hat z})}\mathrm{KL}\left \lbrack p(\mathbf{x}|\mathbf{\hat z})||p_{de,o}(\mathbf{ x}|\mathbf{\hat z}, \psi)\right \rbrack \nonumber \\
    &\ \ \ \ +\mathbb{E}_{p(\mathbf{x},\mathbf{\hat z})} \log p_{de,o}(\mathbf{ x}|\mathbf{\hat z}, \psi) + H(\mathbf{X}) \nonumber \\
    &\ge \mathbb{E}_{p(\mathbf{x},\mathbf{\hat z})} \log p_{de,o}(\mathbf{ x}|\mathbf{\hat z}, \psi) + H(\mathbf{X}) \nonumber\\
    &=\mathbb{E}_{p(\mathbf{\hat z})}\mathbb{E}_{p(\mathbf{x}|\mathbf{\hat z})} \log p_{de,o}(\mathbf{ x}|\mathbf{\hat z}, \psi) + H(\mathbf{X}).
     \label{xz}
\end{align}


Putting \eqref{sz} and \eqref{xz} together finishes the proof of Lemma \ref{lemmaVILB}.
\end{proof}

Then, We further prove Theorem 1 by expanding the probability of the received sequence $p(\mathbf{\hat z})$. By the Markov Chain $\mathbf{S}\to\mathbf{X}\to\mathbf{Z}\to\mathbf{\hat{Z}}\to(\mathbf{\hat S},\mathbf{\hat{X}})$, the probability of $\mathbf{\hat z}$ is determined by the NN parameterized transition probability of the encoder-modulator
\begin{align}
    p(\mathbf{\hat z})=&
    \int _{\mathbf{s},\mathbf{x},\mathbf{z}} p(\mathbf{s},\mathbf{x},\mathbf{z}, \mathbf{\hat z}) d\mathbf{s}d\mathbf{x}d\mathbf{z} \nonumber \\
    =&\int _{\mathbf{s},\mathbf{x},\mathbf{z}} p(\mathbf{s},\mathbf{x})p_{en}(\mathbf{z}|\mathbf{x}, \theta)p(\mathbf{\hat z|\mathbf{z}})d\mathbf{s}d\mathbf{x}d\mathbf{z}. \label{pz}
\end{align}
Then, we introduce \eqref{pz} into Lemma \ref{lemmaVILB}, getting
\begin{align}
&\ \ \ \ I_{\theta}(\mathbf{S};\mathbf{\hat Z}) + \lambda\cdot I_{\theta}(\mathbf{\mathbf{X};\hat Z})\nonumber \\ &\ge
\mathbb{E}_{p(\mathbf{\hat z})}\lbrace \mathbb{E}_{p(\mathbf{s}|\mathbf{\hat z})} \log p_{de,s}(\mathbf{ s}|\mathbf{\hat z}, \phi) \nonumber\\ 
&\ \ \ \ +\lambda\cdot \mathbb{E}_{p(\mathbf{x}|\mathbf{\hat z})} \log p_{de,o}(\mathbf{ x}|\mathbf{\hat z}, \psi)\rbrace + K \nonumber\\
&=
\mathbb{E}_{p(\mathbf{\hat z}|\mathbf{z})}\mathbb{E}_{p_{en}(\mathbf{z}|\mathbf{x}, \theta)}\mathbb{E}_{p(\mathbf{s},\mathbf{x})} \lbrace \mathbb{E}_{p(\mathbf{s}|\mathbf{\hat z})} \log p_{de,s}(\mathbf{s}|\mathbf{\hat z}, \phi)  \nonumber \\
&\ \ \ \ +\lambda\cdot \mathbb{E}_{p(\mathbf{x}|\mathbf{\hat z})} \log p_{de,o}(\mathbf{ x}|\mathbf{\hat z}, \psi)\rbrace +K,
\end{align}
which finishes the proof of Theorem \ref{th1}. Thereby, we get the variational inference lower bound with encoder parameters $\theta$ and decoder parameters $\phi$ and $\psi$.

\section{Proof of Proposition \ref{bpskCorollary}}

\begin{figure*}[t]
\begin{align}
\setcounter{equation}{28}
    I_{\theta}(\mathbf{X};\mathbf{\hat Z})&=-H(\mathbf{X}|\mathbf{\hat Z})+H(\mathbf{X}) \nonumber\\ 
    &\overset{(a)}{\ge} \mathbb{E}_{p(\mathbf{\hat z})}\mathbb{E}_{p(\mathbf{x}|\mathbf{\hat z})} \log p_{de,o}(\mathbf{ x}|\mathbf{\hat z}, \psi) + H(\mathbf{X}) \nonumber\\
    &\overset{(b)}{=} \int_{\mathbf{\hat{z}}}p(\mathbf{\hat z}) d \mathbf{\hat z}\int_{\mathbf{x}}\frac{1}{(2\pi )^{\frac{k}{2}} \sigma_{1}}e^{-\frac{1}{2}(\mathbf{x}-\mathbf{\boldsymbol{\mu}})^{T}\frac{1}{\sigma_1^{2}}\mathbf{I}(\mathbf{x}-\mathbf{\boldsymbol{\mu}})}\log \frac{1}{(2\pi )^{\frac{k}{2}} \sigma_{2}}e^{-\frac{1}{2}(\mathbf{ x}-f_\psi(\mathbf{\hat{z}}))^{T}\frac{1}{\sigma_2^{2}}\mathbf{I}(\mathbf{ x}-f_\psi(\mathbf{\hat{z}}))}d \mathbf{x}\nonumber \\&\ \ \ + H(\mathbf{X})\nonumber \\
    &\overset{(c)}{=} \int_{\mathbf{\hat{z}}}p(\mathbf{\hat z})  (b-d\left\lbrack(\boldsymbol{\mu}-f_\psi(\mathbf{\hat{z}}))^T(\boldsymbol{\mu}-f_\psi(\mathbf{\hat{z}}))\right\rbrack d \mathbf{\hat z} + H(\mathbf{X}) \nonumber\\
    &=-d \cdot \mathbb{E}_{p(\mathbf{\hat z})}
    ||\boldsymbol{\mu}-f_\psi(\mathbf{\hat{z}})||_2^2
    +b + H(\mathbf{X}),
    \label{mutualX}
\end{align}
\hrulefill
\end{figure*}

For BPSK modulation, the categorical distribution of $Z_i$ is degraded to a Bernoulli distribution. We denote the two constellation symbols as $c_1=1$ and $c_2=-1$. Therefore, the transition probability of $\mathbf{Z}$ can be expressed as 
\begin{align}
\setcounter{equation}{25}
    p_{en}(\mathbf{z}|\mathbf{x}, \theta)&=\textstyle\prod_{i=1}^{n}p(z_i|\mathbf{x}, \theta)\nonumber\\
    &=\textstyle\prod_{i=1}^{n} \textstyle\prod_{m=1}^{2}( q_{im|\mathbf{x},\theta})^{\mathbb{I}\left\lbrace z_i=c_m \right\rbrace}\nonumber\\
    &=\textstyle\prod_{i=1}^{n} ( q_{i1|\mathbf{x},\theta})^{\mathbb{I}\left\lbrace z_i=1 \right\rbrace}( q_{i2|\mathbf{x},\theta})^{\mathbb{I}\left\lbrace z_i=-1 \right\rbrace}\nonumber\\
    &=\textstyle\prod_{i=1}^{n}( q_{i|\mathbf{x},\theta}) ^{\frac{1}{2}(1+z_i)}(1- q_{i|\mathbf{x},\theta}) ^{\frac{1}{2}(1-z_i)},\label{bpskOneSymbol}
\end{align}
where to simplify the notation, we denote the probability $q_{i1|\mathbf{x},\theta}$ as $q_{i|\mathbf{x},\theta}$ and correspondingly $q_{i2|\mathbf{x},\theta}$ as $( 1-q_{i|\mathbf{x},\theta})$. 

\section{Proof of Proposition \ref{qamCorollary}}

For rectangular M-QAM where $M=2^{2a}, a=1, 2, ...$, we also consider the I channel and Q channel to be conditionally independent. 
Each constellation symbol $c_m\in \cal C$, $m=1,2,...,M$, can be expressed as $c_{Ir}+j\cdot c_{Qs}$, where we use $j$ to denote the imaginary unit, $c_{Ir}$ to denote the amplitude of the I channel with $c_{Ir}= \frac{2r+1}{\sqrt{M}-1}, r=\frac{-\sqrt{M}}{2},-\frac{\sqrt{M}}{2}+1,...,\frac{\sqrt{M}}{2}-1$, and $c_{Qs}$ to denote the amplitude of the Q channel with $c_{Qs} = \frac{2s+1}{\sqrt{M}-1}, s=\frac{-\sqrt{M}}{2},-\frac{\sqrt{M}}{2}+1,...,\frac{\sqrt{M}}{2}-1$.
    
Therefore, the probability $P(Z_i=c_m|\mathbf{x}, \theta)$ can be written as
\begin{align}
        P(Z_i=c_m|\mathbf{x}, \theta)&=P(Z_{Ii} + j\cdot Z_{Qi} = c_{Ir}+j\cdot c_{Qs}|\mathbf{x}, \theta) \nonumber\\
        &=P(Z_{Ii}=c_{Ir}|\mathbf{x}, \theta)\cdot p(Z_{Qi}=c_{Qs}|\mathbf{x}, \theta)\nonumber\\
        &=q_{ir|\mathbf{x},\theta}^{I}\cdot q_{is|\mathbf{x},\theta}^{Q},
        \label{mqam1}
\end{align}
where to simplify the notation, we use $q_{ir|\mathbf{x},\theta}^{I}$ to denote $P(Z_{Ii}=c_{Ir}|\mathbf{x}, \theta)$ and $q_{is|\mathbf{x},\theta}^{Q}$ to denote $P(Z_{Qi}=c_{Qs}|\mathbf{x}, \theta)$.
Correspondingly, $\mathbb{I}\left\lbrace z_i=c_m \right\rbrace$ can be written as
\begin{align}
        \mathbb{I}\left\lbrace z_i=c_m \right\rbrace
        &=\mathbb{I}\left\lbrace z_{Ii} + j\cdot z_{Qi} = c_{Ir}+j\cdot c_{Qs} \right\rbrace\nonumber\\
        &=\mathbb{I}\left\lbrace z_{Ii}=c_{Ir} \right\rbrace \cdot \mathbb{I}\left\lbrace z_{Qi}=c_{Qs}\right\rbrace\nonumber\\
        &=\mathbb{I}\left\lbrace z_{Ii}=\frac{2r+1}{\sqrt{M}-1} \right\rbrace \cdot \mathbb{I}\left\lbrace z_{Qi}=\frac{2s+1}{\sqrt{M}-1}\right\rbrace.
        \label{mqam2}
\end{align}
Substituting \eqref{mqam1} and \eqref{mqam2} into \eqref{generalpmf} obtains \eqref{mqam}.

\section{Proof of Corollary \ref{corollaryImage}}

In image semantic communications, for the source data, we take the Gaussian assumptions \eqref{true posterior} and \eqref{variational} into consideration, and further obtain \eqref{mutualX} at the top of this page,
where in $(a)$ $p_{de,o}(\mathbf{ x}|\mathbf{\hat z}, \psi)$ is the NN parameterized variational approximation to the true posterior distribution $p(\mathbf{x}|\mathbf{\hat z})$; $(b)$ follows the expansion of the expectation; $(c)$ follows the repeated use of the equation $\mathbf{x}^{\mathrm{T}}\mathrm{A}\mathbf{x}=\mathrm{tr}(\mathrm{A}\mathbf{x}\mathbf{x}^{\mathrm{T}})$ with $d$ and $b$ being two constants.

As for the semantic information, since it is set as the classification label of the images, it is a discrete variable with a categorical distribution. Therefore, we have
\begin{align}
\setcounter{equation}{29}
    I_{\theta}(\mathbf{S};\mathbf{\hat Z})&=-H(\mathbf{S}|\mathbf{\hat Z})+H(\mathbf{S}) \nonumber\\ 
    &\ge \mathbb{E}_{p(\mathbf{\hat z})}\mathbb{E}_{p(\mathbf{s}|\mathbf{\hat z})} \log p_{de,s}(\mathbf{ s}|\mathbf{\hat z}, \psi) + H(\mathbf{S}) \nonumber\\
    &=\mathbb{E}_{p(\mathbf{\hat z})}\Biggl(\sum^{L}_{l=1}p(\mathbf{s}=l|\mathbf{\hat{z}})\log p_{de,s}(\mathbf{ s}=l|\mathbf{\hat z}, \phi)\Biggr)\nonumber\\&\ \ \ \ \ +H(\mathbf{S}).
\end{align}

Then, by replacing probability distributions with empirical distributions, we have
\begin{align}
&\lim_{N \to \infty} \frac{1}{N}\sum_{n=1}^{N} \Biggl(  \sum_{l=1}^{L} p(\mathbf{s}_n=l|\mathbf{\hat{z}}_\theta^{n})\log p_{de,s}(\mathbf{ s}_n=l|\mathbf{\hat z}_\theta^{n}, \phi) \Biggr)\nonumber\\
  &\ \ =  \mathbb{E}_{p(\mathbf{\hat z})}\Biggl(\sum^{L}_{l=1}p(\mathbf{s}=l|\mathbf{\hat{z}})\log p_{de,s}(\mathbf{ s}=l|\mathbf{\hat z}, \phi)\Biggr), \label{emp1} \\
&\lim_{N \to \infty} \frac{1}{N}\sum_{n=1}^{N}  ||\mathbf{x}_n- f_\psi(\mathbf{\hat z}_\theta^{n})||_2^2 = \mathbb{E}_{p(\mathbf{\hat z})}||\boldsymbol{\mu}-f_\psi(\mathbf{\hat{z}})||_2^2.\label{emp2}
\end{align}
Taking \eqref{emp1} and \eqref{emp2} into VILB \eqref{elbo}, we get the loss function for image semantic communications and complete the proof of Corollary \ref{corollaryImage}.




\end{appendices}

\bibliographystyle{IEEEtran}
\bibliography{jcm_bib}{}

\end{document}